\documentclass{llncs}

\usepackage{graphicx}
\usepackage{amsfonts,amsmath}
\usepackage{subfigure}
\usepackage{graphicx}
\usepackage{verbatim}
\usepackage{times}
\usepackage{url}
\usepackage{multirow}
\usepackage{wrapfig}
\usepackage{placeins}
\urlstyle{same}

\pagestyle{plain}

\newcommand{\rar}{\rightarrow}
\newcommand{\Z}{\mathbb Z}
\renewcommand{\O}{\mathcal O}

\let\doendproof\endproof
\renewcommand\endproof{~\hfill\qed\doendproof}

\begin{document}

\title{Happy Edges: Threshold-Coloring of Regular Lattices}
\author{Md.~J.~Alam, S.~G.~Kobourov, S.~Pupyrev, and J.~Toeniskoetter}
\institute{Department of Computer Science, University of Arizona, Tucson, USA}
\maketitle

\begin{abstract}
We study a graph coloring problem motivated by a fun Sudoku-style puzzle.
Given a bipartition of the edges of a graph into {\em near} and {\em far}
sets and an integer threshold $t$, a {\em threshold-coloring} of the graph is an assignment of integers to the vertices so
that endpoints of near edges differ by $t$ or less, while endpoints of
far edges differ by more than $t$.
We study threshold-coloring of tilings of the plane by regular polygons, known as Archimedean lattices, and their duals, the Laves lattices. We prove that some are threshold-colorable
with constant number of colors  for any edge labeling, some require an
unbounded number of colors for specific labelings, and some are not threshold-colorable.
\end{abstract}


\section{Introduction}
A Sudoku-style puzzle called \emph{Happy Edges}. Similar to Sudoku, Happy Edges is a grid
(represented by vertices and edges), and the task is to fill in the vertices with numbers so as to make all the edges ``happy''
: a solid edge is happy if the corresponding numbers of its endpoints differ by at most $1$, and a dashed
edge is happy if the numbers of its endpoints differ by at least 2; see Fig.~\ref{fig:puzzle}.

\begin{figure}[b]
\vspace{-.1cm}
\centering
	\hspace{.7cm}
	\includegraphics[scale=.65, trim={5.7cm 19.5cm 12.5cm 5cm}]{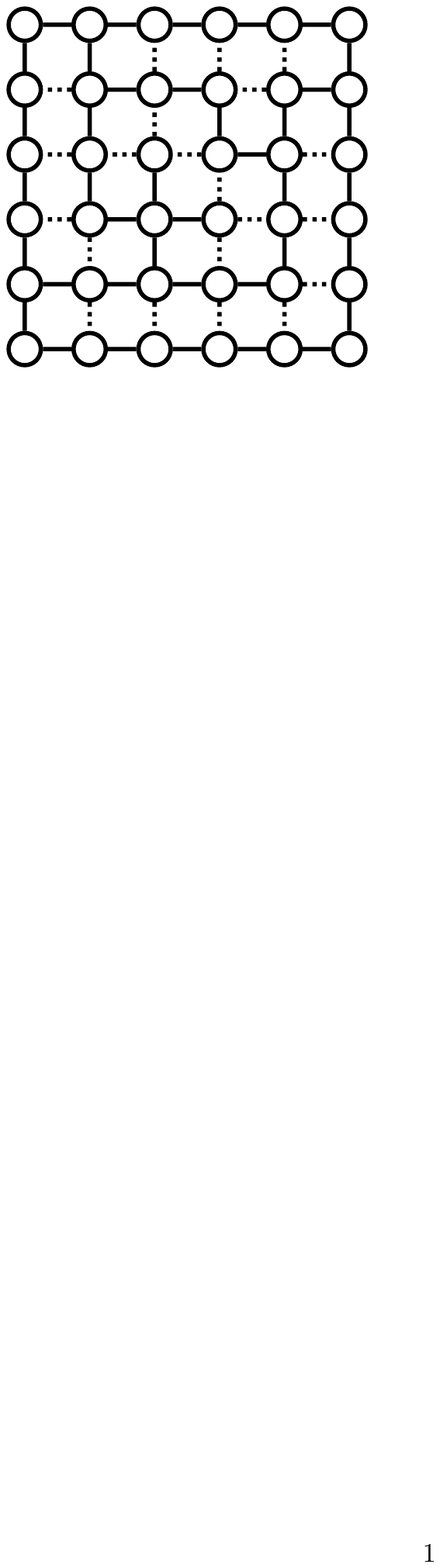}
	\label{fig:puzzle}
	
~~~~~~~~~~~~~~~~~~~~~
	\caption{An example of the \emph{Happy Edges} puzzle: fill in numbers so
          that nodes separated by a solid edge differ by at most $1$
          and nodes separated by a dashed edge differ by at least $2$.
    Fearless readers are invited to solve the puzzle before reading further!
    More puzzles are available online at \protect\url{http://happy-edges.cs.arizona.edu}.}
\end{figure}

In this paper, we study a generalization of the puzzle modeled by a graph coloring problem.
The generalization is twofold. Firstly, we consider several underlying regular grids as a base
for the puzzle, namely Archimedean and Laves lattices. Secondly, we allow for arbitrary integer
difference to distinguish between solid and dashed edges. Thus, the formal model of the puzzle
is as follows. The input is a graph with \emph{near} and \emph{far} edges. The goal is
to assign integer labels (or colors) to the vertices and compute an
integer ``threshold'' so that the
distance between the endpoints of a near edge is within the threshold, while the distance between the endpoints
of a far edge is greater than the threshold.

We consider a natural class of graphs 
called Archimedean and
Laves lattices, which yield symmetric and aesthetically appealing
game boards; see Fig.~\ref{tab:bla}. An Archimedean
lattice is a graph of an edge-to-edge tiling of the plane using regular polygons with the property
that all vertices of the polygons are identical under translation and rotation. Edge-to-edge means
that each distinct pair of edges of the tiling intersect at a single endpoint or not at all. There are
exactly $11$ Archimedean lattices and their dual graphs are the Laves
lattices (except for $3$ duals which are Archimedean). We are
interested in identifying the lattices that can be appropriately colored
for any prescribed partitioning of edges into near and far. Such lattices can be safely utilized for
the \emph{Happy Edges} puzzle, as even the simplest random strategy may serve as a puzzle generator.

\begin{figure}
\vspace{-.25cm}
\hspace{-.2cm}\includegraphics{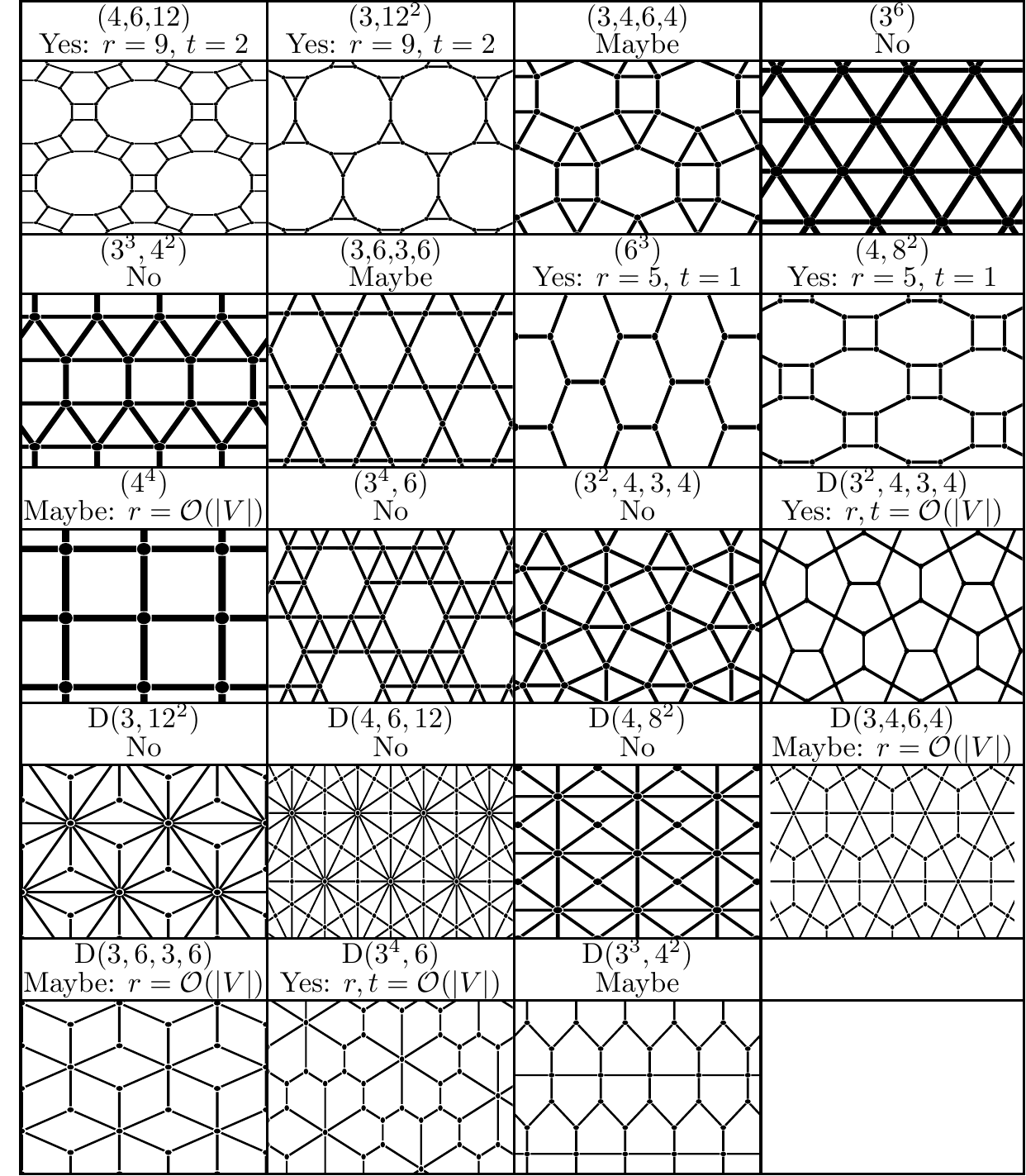}
\caption{The 11 Archimedean and 8 Laves lattices. With each lattice's name, we provide a summary of results
		concerning the threshold-coloring of the lattice. For those which are total-threshold-colorable we list
		the best known values of $r$ and $t$. For those which might be total-threshold-colorable, we list known
		constraints on $r$ and $t$.}
	\label{tab:bla}
\end{figure}

Another motivation for studying the threshold coloring problem comes from the geometric problem of
\emph{unit-cube proper contact representation} of planar graphs.
In such a representation, vertices are represented by unit-size cubes, and
edges are represented by common boundary of non-zero area between the two corresponding cubes. Finding
classes of planar graphs with unit-cube proper contact representation was posed as an
open question by Bremner~\emph{et~al.}~\cite{Bremner12}. As shown in~\cite{alam13},
threshold-coloring can be used to find such a representation of certain graphs.

\smallskip\noindent{\bf Terminology and Problem Definition:}
An \emph{edge labeling} of a graph $G=(V,E)$ is a map $l:E\rar \{N,F\}$. If $(u,v)\in E$, then
$(u,v)$ is called \emph{near} if $l(u,v)=N$ and $u$ is said to be {\em near} to $v$. Otherwise, $(u,v)$ is called \emph{far} and $u$ is {\em far} from $v$. A \emph{threshold-coloring}
of $G$ with respect to $l$ is a map $c:V\rar\Z$ such that there exists an integer $t\geq0$, called the {\em threshold},
satisfying for every edge $(u,v)\in E$, $|c(u)-c(v)|\leq t$ if and only if $l(u,v)=N$. If $m$ is the
minimum value of $c$, and $M$ the maximum, then $r> M-m$ is the \emph{range} of $c$.
The map $c$ is called a $(r,t)$\emph{-threshold-coloring} and $G$ is {\em threshold-colorable} or {\em $(r,t)$-threshold-colorable
with respect to $l$.}

If $G$ is $(r,t)$-threshold-colorable with respect to every edge labeling, then $G$ is {\em $(r,t)$-total-threshold-colorable}, or simply
{\em total-threshold-colorable.} If $G$ is not
$(r,t)$-total-threshold-colorable, then $G$ is {\em non-$(r,t)$-total-threshold-colorable}, or {\em non-total-threshold-colorable}
if $G$ is non-$(r,t)$-total-threshold-colorable for all values of $(r,t)$.

In an edge-to-edge tiling of the plane by regular polygons,
the \emph{species} of a vertex $v$ is
the sequence of degrees of polygons that $v$ belongs to, written in clockwise order. For example,
each vertex of the triangle lattice has 6 triangles, and so has species $(3,3,3,3,3,3)$. A vertex of
the square lattice has species $(4,4,4,4)$, and vertices of the
octagon-square lattice have species $(4,8,8)$. Exponents are used to
abbreviate this: $(4,8^2)=(4,8,8)$.
The {\em Archimedean tilings} are the 11 tilings by regular polygons such that each vertex has
the same species; we use this species to refer to the lattice. For example, $(6^3)$ is the hexagon lattice, and $(3,12^2)$ is
the lattice with triangles and dodecagons.
An {\em Archimedean lattice} is an infinite graph defined by the edges and vertices of an Archimedean tiling. If $A$ is an Archimedean lattice, then we refer to its dual graph as D$(A)$.
The lattice $(3^6)$ of triangles and the lattice $(6^3)$ of hexagons are dual to each other, whereas the lattice $(4^4)$ of squares is dual to itself.
The duals of the other 8 Archimedean lattices are not Archimedean, and these are referred to as {\em Laves lattices};
see Fig.~\ref{tab:bla}. By an abuse of notation, any induced subgraph of an Archimedean or Laves lattice is called an
Archimedean or Laves lattice.


\newcommand{\TTC}{\textbf{Total-Threshold-Coloring}}

\smallskip\noindent{\bf Related Work:}
Many problems in graph theory deal with coloring or labeling the vertices of a
graph~\cite{Roberts91} and many graph classes are defined based on such a coloring~\cite{Brandstadt99}.
Alam~\textit{et al.}~\cite{alam13} introduce threshold-coloring and show that deciding whether a graph
is threshold colorable with respect to an edge labeling is equivalent to the graph sandwich problem for
proper-interval-representability, which is NP-complete~\cite{Golumbic95}. They also show that graphs
with girth (that is, length of shortest cycle) at least $10$ are always total-threshold-colorable.

Total-threshold-colorable graphs are related to threshold and difference graphs.
In \emph{threshold graphs} there exists a real number $S$ and for every vertex
$v$ there is a real weight $a_v$ so that $(v,w)$ is an edge if and only if
$a_v + a_w \ge S$~\cite{Mahadev95}.
A graph is a \textit{difference graph} if there is a real number $S$ and for every vertex $v$
there is a real weight $a_v$ so that $|a_v| < S$ and $(v,w)$ is an edge if and only if
$|a_v - a_w| \ge S$~\cite{Hammer90}. Note that for
both these classes the existence of an edge is determined wholly by the threshold $S$, while
in our setting the edges defined by the threshold must also belong to the original (not necessarily
complete) graph.

Threshold-colorability is related to the \emph{integer distance graph} representation~\cite{Eggleton86,Ferrara05}.
An integer distance graph is a graph with the set of integers as vertex set and with an edge joining two vertices $u$ and $v$ if and only if $|u - v| \in D$,
where $D$ is a subset of the positive integers. Clearly, an integer distance graph is threshold-colorable if the set $D$ is a set of consecutive
integers. Also related is \emph{distance constrained graph
labeling}, denoted by $L(p_1,\dots, p_k)$-labeling, a labeling of the vertices of a graph
so that for every pair of vertices with distance at most $i\le k$ the difference of their
labels is at least $p_i$. $L(2,1)$-labelings are
well-studied~\cite{Fiala05} and minimizing the number
of labels is NP-complete, even for diameter-2 graphs~\cite{Griggs92}. It is NP-complete
to determine if a labeling exists with at most $k$ labels for every fixed integer $k\ge 4$~\cite{Fiala01}.

\smallskip\noindent{\bf Our Results:} We study the threshold-colorability of
the Archimedean and Laves lattices; see Fig.~\ref{tab:bla} for an overview
of the results. First, we prove that $6$ of them are
threshold-colorable for any edge labeling. Hence, the \emph{Happy Edges}
puzzle always have a solution on these lattices.
Then we show that $7$ of the
lattices have an edge labeling admitting no threshold-coloring. Finally, for $3$ no
constant range of colors suffices.
The puzzle motivating the problem is available at \url{http://happy-edges.cs.arizona.edu}; see
also Fig.~\ref{fig:game}.


\section{Total-Threshold-Colorable Lattices}
Given a graph $G = (V,E)$, a subset $I$ of $V$ is called \emph{2-independent} if the
shortest path between any two distinct vertices of $I$ has length at least 3. For a
subset $V'$ of $V$, we denote the subgraph of $G$ induced by $V'$ as $G[V']$. We give
an algorithm for threshold-coloring graphs whose vertex set has a partition into
a 2-independent set $I$ and a set $T$ such that $G[T]$ is a forest.
Dividing $G$ into a forest and 2-independent set has been used for other graph coloring
problems, for example in \cite{albertson04,timmons08} for the star coloring problem.

\subsection{The ($6^3$) and $(4,8^2)$ Lattices}

\begin{lemma}
	Suppose $G = (I\cup T,E)$ is a graph such that $I$ is 2-independent, $G[T]$
	is a forest, and $I$ and $T$ are disjoint. Then $G$ is
	$(5,1)$-total-threshold-colorable.
\end{lemma}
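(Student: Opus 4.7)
The plan is to exhibit an explicit $(5,1)$-coloring with palette $\{0,1,2,3,4\}$ and threshold $t=1$: set $c(v)=2$ for every $v\in I$ and restrict every $u\in T$ to the reduced palette $\{0,1,3,4\}$, deliberately avoiding the middle value.

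The first observation I would use is that $2$-independence of $I$ forces distinct vertices of $I$ to share no neighbor, so every $u\in T$ has at most one $I$-neighbor. Combined with $c(v)=2$ for all $v\in I$, each $u\in T$ therefore inherits exactly one of three constraints from its incident $I$-edges: if $u$ has no $I$-neighbor, then $c(u)$ is unrestricted within $\{0,1,3,4\}$; if $u$ has a near $I$-neighbor, then $c(u)\in\{1,3\}$; if $u$ has a far $I$-neighbor, then $c(u)\in\{0,4\}$. It then remains to color the forest $G[T]$ so as to respect both these per-vertex constraints and the threshold condition on every forest edge.

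I would do this tree by tree. For each tree in $G[T]$, pick an arbitrary root $r$, assign $c(r)$ any value from its (non-empty) constraint set, and extend by breadth-first search. For each newly visited vertex $u$ with already-colored parent $p$, choose $c(u)$ from the intersection of three sets: (a) $\{x : |x-c(p)|\le 1\}$ if $(u,p)$ is near, or $\{x:|x-c(p)|\ge 2\}$ if $(u,p)$ is far; (b) the palette $\{0,1,3,4\}$; and (c) the $I$-constraint set of $u$.

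The main obstacle is verifying that this intersection is never empty, and this is precisely why $2$ is excluded from the $T$-palette: had $c(p)=2$ been allowed, a near forest-edge leading to a child with a far $I$-neighbor would yield $\{1,2,3\}\cap\{0,4\}=\emptyset$, and a far forest-edge to a child with a near $I$-neighbor would yield $\{0,4\}\cap\{1,3\}=\emptyset$. With $c(p)\in\{0,1,3,4\}$, a direct inspection of the $4\cdot 2\cdot 3=24$ combinations of parent color, edge label, and child constraint confirms the intersection is non-empty in every case, so the BFS never gets stuck. Once $G[T]$ is colored, the threshold condition holds throughout: on forest edges by construction, on $I$--$T$ edges by the definition of the $T$-vertex constraints, and vacuously on the (non-existent) $I$--$I$ edges; since the colors used lie in $\{0,\dots,4\}$, this is a $(5,1)$-coloring.
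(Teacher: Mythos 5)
The proposal is correct and is essentially the paper's own proof up to a translation of the palette: the paper colors $I$ with $0$ and $T$ with $\{\pm1,\pm2\}$, where near/far $I$-neighbors force $|c(u)|=1$ or $|c(u)|=2$ and the sign is flipped to satisfy the tree edge, while you color $I$ with $2$ and $T$ with $\{0,1,3,4\}$ and replace the sign-flip by an explicit check that the three constraint sets always intersect. Both arguments rest on the same two facts (each $T$-vertex has at most one $I$-neighbor by 2-independence, and each non-root tree vertex has exactly one already-colored $T$-neighbor during BFS), so this is the same proof.
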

\begin{proof}
	Suppose $l:E\rar\{N,F\}$ is an edge labeling.
	For each $v\in I$, set $c(v) = 0$. Each vertex in $T$ is assigned a color from
	$\{-2,-1,1,2\}$ as follows. Choose a component $T'$ of $G[T]$, and select a root
	vertex $w$ of $T'$. If $w$ is far from a neighbour in $I$, set $c(w)=2$. Otherwise, $c(w)=1$.
	Now we conduct breadth first search on $T'$, coloring each vertex as it is traversed. When we
	traverse to a vertex $u\neq w$, it has one neighbour $x\in T'$ which has been colored, and at most one neighbour
	$v\in I$. If $v$ exists, we choose the color $c(u)=1$ if $l(u,v)=N$, and $c(u)=2$ otherwise. Then, if the edge
	$(u,x)$ is not satisfied, we multiply $c(u)$ by $-1$.
	If $v$ does not exist, we choose $c(u)= 1$ or $-1$ to satisfy the edge $(u,x)$.
    By repeating the procedure on each component of $G[T]$, we construct a $(5,1)$-threshold-coloring
    of $G$ with respect to the labeling $l$.
\end{proof}

The ($6^3$) and $(4,8^2)$ lattices have such a decomposition; see Fig.~\ref{fig:hexoct}. Hence,

\begin{theorem}
	The $(6^3)$ and $(4,8^2)$ lattices are (5,1)-total-threshold-colorable.
\end{theorem}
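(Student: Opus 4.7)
The plan is to invoke the preceding lemma, which reduces the statement to the purely combinatorial task of exhibiting, for each of the two lattices, a vertex partition $V = I \cup T$ such that $I$ is 2-independent (every pair of vertices in $I$ has graph distance at least 3) and the induced subgraph $G[T]$ is a forest. Given such a partition, $(5,1)$-total-threshold-colorability follows immediately.

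For the $(6^3)$ (hexagonal) lattice, I would construct $I$ as a periodic, sparse subset using the brick-wall drawing: in each horizontal zigzag row, pick every third vertex, shifting the pattern by a fixed amount between consecutive rows so that no two chosen vertices share an edge or a common neighbor. The 2-independence of $I$ is then a local check along two adjacent rows. For the forest property, I would note that removing $I$ breaks each row into short horizontal segments, linked by a subset of the vertical edges; since every hexagonal face of the lattice loses at least one vertex to $I$, no face boundary survives in $G[T]$, and a standard girth-based argument confirms acyclicity.

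For the $(4,8^2)$ lattice, I would exploit the fact that every vertex lies in exactly one square. I would pick one vertex from each square in a periodic pattern, rotating the choice between neighboring squares so that the chosen vertices sit at pairwise graph distance at least three. Removing $I$ leaves each square as a path on three vertices and opens the boundary of each octagon. Acyclicity of $G[T]$ then follows from a small case analysis: any surviving cycle would have to traverse an octagon, but each octagon loses a vertex to $I$ and hence its boundary cycle is broken.

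The main obstacle is identifying the correct periodic selection patterns in each lattice so that two competing constraints hold simultaneously: $I$ must be sparse enough for pairwise distance at least three, and dense enough that $G[T]$ is cycle-free. The illustration in Fig.~\ref{fig:hexoct} presumably exhibits one valid choice for each lattice, after which both properties are local and follow from periodicity, and the theorem is an immediate corollary of the lemma.
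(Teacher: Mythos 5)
Your overall route is exactly the paper's: the theorem is derived as an immediate corollary of the preceding lemma once each lattice is decomposed into a 2-independent set $I$ and a set $T$ with $G[T]$ a forest, and the paper itself does nothing more than exhibit the two decompositions in Fig.~\ref{fig:hexoct}. So there is no difference in approach, and your reduction is correct.

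However, the explicit pattern you sketch for $(6^3)$ cannot work. Both lattices are $3$-regular, and for a 2-independent set the closed neighborhoods of its vertices are pairwise disjoint (a common element would force distance at most $2$), so each accounts for $4$ vertices and $|I|\le |V|/4$. Picking every third vertex of every zigzag row gives density $1/3$, which already violates this bound; concretely, a chosen vertex with an upward vertical edge forbids three consecutive positions in the row above, and a period-$3$ selection meets every block of three consecutive positions, so some pair of chosen vertices ends up at distance at most $2$. The valid pattern is sparser (density $1/4$, period $4$ along the rows). Relatedly, your acyclicity argument for $G[T]$ --- ``every hexagonal face loses a vertex to $I$, hence no cycle survives'' --- is not sufficient on its own: a cycle in $G[T]$ need not bound a face, and the $I$-vertices of the faces it encloses could all lie strictly in its interior rather than on it. So the forest property genuinely has to be checked against the specific periodic pattern (which is what the figure certifies), not deduced from the face-hitting property. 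Your $(4,8^2)$ sketch (one vertex per square, suitably rotated) is closer to the mark, but the same caveat about verifying acyclicity directly applies there too.
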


\begin{figure}[t]
\vspace{-1cm}	
\centering
	\subfigure[][]{
		\includegraphics[height=2.2cm, trim=0 0 0 10, scale=.55]{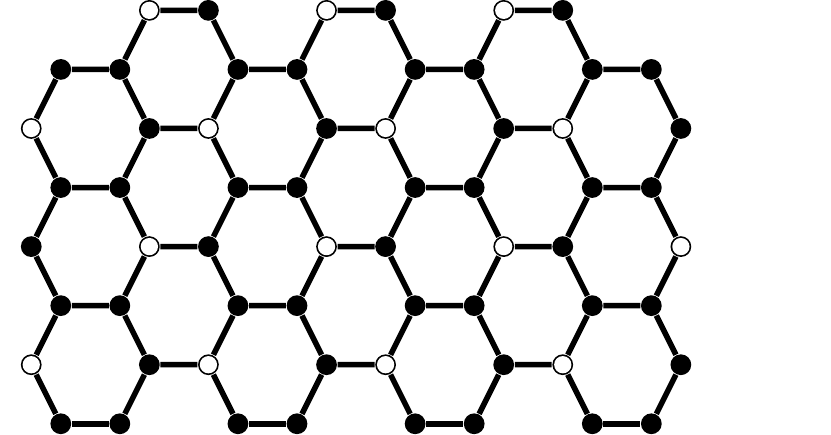}
		\label{fig:hex}
	}
~~~~~~~~~~~~~~~
	\subfigure[][]{
		\includegraphics[height=2.2cm, trim=0 0 0 10, scale=.5]{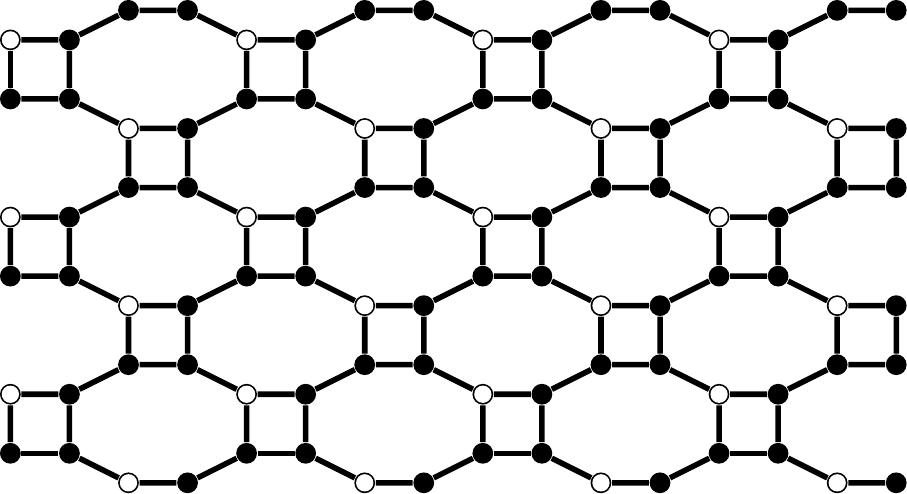}
		\label{fig:octsq}
	}
	\caption{Decomposing vertices into a 2-independent set, shown in white, and a forest. (a) The $(6^3)$ lattice.
		(b) The $(4,8^2)$ lattice.}
	\label{fig:hexoct}
\end{figure}

\subsection{The ($3,12^2)$ and $(4, 6, 12)$ Lattices}

In order to color the lattices, we use $(9, 2)$-color space, that is, threshold 2 and 9 colors, such
 as $\{0,\pm1,\pm2,\pm3,\pm4\}$. This color-space has the following properties.

\begin{lemma}\label{lemma:coloring}
  Consider a path with 3 vertices $(v_0,v_1,v_2)$, such that $v_0$,$v_2$ have
  colors $c(v_0),c(v_2)$ in $\{0,\pm1,\pm2,\pm3,\pm4\}$. For threshold 2 and any edge
  labeling,
  \begin{enumerate}
    \item[(a)] If $c(v_0) = 0,$ and $c(v_2)\in\{\pm1,\pm2,\pm3,\pm4\}$, then we can choose $c(v_1)$ in $\{\pm2,\pm3\}$.
    \item[(b)] If $c(v_0) = 0$ and $c(v_2)\in\{\pm2,\pm3,\pm4\}$, then we can choose $c(v_1)$ in $\{\pm2,\pm4\}$.
    \item[(c)] If $c(v_0) = \pm1$, and $c(v_2)\in\{\pm2,\pm3\}$, then we can choose $c(v_1)$ in $\{\pm1,\pm4\}$.
  \end{enumerate}
\end{lemma}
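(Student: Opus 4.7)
The plan is to prove each of the three items by direct case analysis on the pair of edge labels $(l(v_0,v_1),l(v_1,v_2))\in\{N,F\}^2$, trimming the casework with two symmetries. First, the threshold predicate depends only on $|c(v_1)-c(v_i)|$ and is therefore invariant under the global sign-flip $c\mapsto -c$, so in items (a) and (b), where $c(v_0)=0$, I may assume $c(v_2)\geq 0$, and in item (c) I may assume $c(v_0)=+1$; the opposite-sign cases then follow by negating the chosen value of $c(v_1)$.

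The key structural observation is that in each item the candidate set for $c(v_1)$ splits into two pairs $\{\pm a\}$ and $\{\pm b\}$ whose magnitudes are chosen so that every element of $\{\pm a\}$ is near $v_0$ and every element of $\{\pm b\}$ is far from $v_0$: in (a) the two magnitudes are $2$ and $3$ (giving distances $2$ and $3$ to $c(v_0)=0$), in (b) they are $2$ and $4$, and in (c) the distances $|{\pm 1}-1|\in\{0,2\}$ and $|{\pm 4}-1|\in\{3,5\}$ again split cleanly into a near pair and a far pair at threshold $2$. Consequently the label on the edge $(v_0,v_1)$ already forces the magnitude of $c(v_1)$, and only the sign of $c(v_1)$ remains to be decided.

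For the sign it suffices to check, for each admissible value of $c(v_2)$, that of the two candidates of the forced magnitude one is near $c(v_2)$ and the other is far from $c(v_2)$; the label on the edge $(v_1,v_2)$ then dictates the choice. This reduces to a handful of arithmetic comparisons: for (a) with $c(v_2)\in\{1,2,3,4\}$ one checks $|2-c(v_2)|,|3-c(v_2)|\leq 2$ while $|{-2}-c(v_2)|,|{-3}-c(v_2)|\geq 3$, and the analogous tables for $c(v_2)\in\{2,3,4\}$ in (b) and $c(v_2)\in\{2,3\}$ in (c) behave the same way. There is no conceptual obstacle beyond this bookkeeping; the only thing to be careful about is that in (c) the ``same sign'' that makes $c(v_1)$ near $c(v_2)$ is determined by the sign of $c(v_2)$ rather than by an absolute convention, which is handled transparently by the initial symmetry reduction.
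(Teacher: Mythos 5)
Your proof is correct and follows essentially the same route as the paper's: the label of $(v_0,v_1)$ forces the magnitude of $c(v_1)$ (both signs of each candidate magnitude are simultaneously near to or far from $c(v_0)$), and the label of $(v_1,v_2)$ then fixes the sign, since of the two signed candidates exactly one is near $c(v_2)$. One minor caution in part (c): the global sign flip lets you normalize $c(v_0)=+1$ but not also the sign of $c(v_2)$, so the cases $c(v_2)\in\{-2,-3\}$ must still be checked explicitly --- they do go through, with the near option being the candidate whose sign agrees with that of $c(v_2)$, which is exactly the rule the paper states and which you note in your closing remark.
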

\begin{proof}
  \begin{enumerate}
  \item[(a)] First, we choose $c(v_1)=\pm2$ if $v_1$ is near to $v_0$, and $\pm3$
    otherwise. Then, if $v_1$ is near to $v_2$, choose the sign of $c(v_1)$ to
    agree with $c(v_2)$. Otherwise choose the sign of $c(v_1)$ to be opposite
    $c(v_2).$
  \item[(b)]Choose $c(v_1)=\pm2$ if $v_1$ is near to $v_0$, and $\pm4$ otherwise. Then,
    choose the sign of $c(v_1)$ as before.
  \item[(c)] Choose $c(v_1)=\pm1$ if $v_1$ is near to $v_0$, and $c(v_1)=\pm4$
    otherwise. Then, choose the sign of $c(v_1)$ as before.
  \end{enumerate}
\end{proof}

On a high level, our algorithms for the $(3,12^2)$ and $(4, 6, 12)$ lattices 
are very similar to each other: we identify small ``patches'', and then assemble them
into the lattice; see Figs.~\ref{fig:tiles}-\ref{fig:tridodec}.
We first show how to color a patch for $(3,12^2)$ and then for $(4,6,12)$.

\begin{lemma}
\label{lemma:tridodec}
Let $G$ be the graph shown in Fig~\ref{fig:tridodectile}.
Suppose $c(u_0) = c(u_1) = 0$ and $c(v_0) = \pm1$. Then for any edge labeling,
this coloring can be extended to a $(9, 2)$-threshold-coloring of $G$ such that $v_5$ is colored 1 or $-1$.
\end{lemma}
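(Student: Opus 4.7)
My plan is to extend the partial coloring by a sequence of applications of Lemma~\ref{lemma:coloring}, processing the remaining vertices of the patch in an order in which each new vertex sits as the middle vertex of a length-two path whose two endpoints are already colored. The patch in Fig.~\ref{fig:tridodectile} is a small subgraph of the $(3,12^2)$ lattice built around one dodecagon together with its adjacent triangles; $u_0,u_1,v_0$ are anchor vertices with prescribed colors $0,0,\pm1$, and $v_5$ is the ``exit'' vertex that must end up colored in $\{\pm1\}$ so that the tile can later be stitched to its neighbor in the assembly argument promised in the preceding paragraph.

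Concretely, I would walk along the dodecagon boundary from $v_0$ toward $v_5$, coloring $v_1,v_2,v_3,v_4$ in order. At each step, the new vertex $v_i$ is the middle vertex of a length-two path $v_{i-1}\,v_i\,x_i$ in $G$, where $x_i$ is some already-colored vertex (one of $u_0,u_1$, or an earlier $v_j$ lying on a short cycle through $v_i$). Given the two endpoint colors, the relevant case of Lemma~\ref{lemma:coloring} supplies a color for $v_i$ that is consistent with the labels of both incident edges; the first application uses $c(v_0)=\pm1$ and an anchor colored $0$, which falls under case (a) after relabeling. Any interior vertex of $G$ not on this boundary walk is colored afterwards in the same fashion, since by construction each such vertex lies on a length-two path whose endpoints have both been colored. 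Because every case of Lemma~\ref{lemma:coloring} offers at least two distinct magnitudes for the middle color, the walk is never forced into a dead end, no matter the edge labels.

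The main obstacle is the terminal condition $|c(v_5)|=1$, which is strictly stronger than what a blind application of Lemma~\ref{lemma:coloring} at the last step would guarantee (its outputs for $v_5$ could lie in $\{\pm2,\pm3\}$ or $\{\pm1,\pm4\}$). I would handle this by steering the earlier choices for $v_3$ and $v_4$ so that the two already-colored neighbors of $v_5$ have colors whose magnitudes lie in $\{0,1\}$: the sign and magnitude freedom provided by Lemma~\ref{lemma:coloring} at each intermediate step gives enough slack to engineer this, typically by splitting into cases according to the labels of the last two or three edges. Once this setup is in place, a short case analysis on the four possible label combinations of the two edges at $v_5$ shows that $c(v_5)\in\{\pm1\}$, with sign chosen to match or oppose that of its neighbors, always satisfies both incident edges under threshold $2$. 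Verifying that the walk can be steered to produce the required neighbor colors regardless of the full edge labeling is the real technical content of the proof; the rest is routine bookkeeping.
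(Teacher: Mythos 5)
Your overall strategy --- walking from $v_0$ to $v_5$ and coloring each new vertex as the middle of a length-two path via Lemma~\ref{lemma:coloring} --- is exactly the paper's approach, and your first applications (part~(a) on $(u_0,v_1,v_0)$, etc.) are on the right track. But the endgame you propose is wrong, and it is precisely the step you yourself flag as ``the real technical content.'' You plan to steer the already-colored neighbor(s) of $v_5$ to have magnitude in $\{0,1\}$ and then pick $c(v_5)=\pm1$. With threshold $t=2$, any two colors both of magnitude at most $1$ differ by at most $2$, so every edge between them is forced to be \emph{near}; if the labeling makes the edge at $v_5$ \emph{far}, no choice of sign can save you. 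The correct target is the opposite: you need $c(v_4)\in\{\pm2,\pm3\}$. Then taking $c(v_5)=\pm1$ with the \emph{same} sign as $c(v_4)$ gives a difference of $1$ or $2$ (near), while the \emph{opposite} sign gives $3$ or $4$ (far), so both labels are realizable. The paper gets this for free because its final intermediate step applies Lemma~\ref{lemma:coloring}(a) to the path $(u_1,v_4,v_3)$ with $c(u_1)=0$, and part~(a) always outputs a color in $\{\pm2,\pm3\}$. (A second, smaller inaccuracy: in the patch of Fig.~\ref{fig:tridodectile}, $v_5$ has only one already-colored neighbor, namely $v_4$, not two.)

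You also leave unverified the claim that the walk never dead-ends. This is not because ``every case offers two magnitudes,'' but because the output set of each application of Lemma~\ref{lemma:coloring} matches the input hypothesis of the next one: the paper's chain is (a) on $(u_0,v_1,v_0)$ giving $c(v_1)\in\{\pm2,\pm3\}$, then (c) on $(v_0,v_2,v_1)$ --- which requires exactly an endpoint $\pm1$ and an endpoint in $\{\pm2,\pm3\}$ --- giving $c(v_2)\in\{\pm1,\pm4\}$, then (a) on $(u_1,v_3,v_2)$ and (a) on $(u_1,v_4,v_3)$, each giving outputs in $\{\pm2,\pm3\}$. Checking this hypothesis-matching, and fixing the target set for $c(v_4)$, is what turns your outline into a proof.
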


\begin{figure}[t]
\vspace{-1cm}	
\centering
	\subfigure[][]{
		\includegraphics{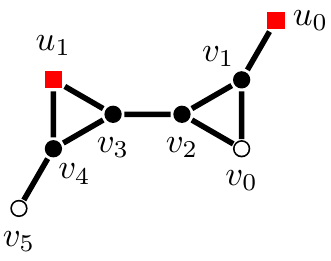}
		\label{fig:tridodectile}
	}
~~~~~~~~~~~~~~~~~~~~~~~
	\subfigure[][]{
		\includegraphics{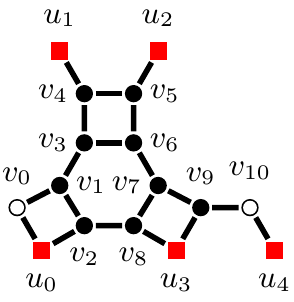}	
		\label{fig:sqhexdodectile}
	}
	\vspace{-.0cm}\caption{Illustration of Lemma~\ref{lemma:tridodec} and~\ref{lm:sqhexdodec}.
    (a)~A subgraph of the ($3,12^2$) lattice. (b)~A subgraph of the ($4,6,12$) lattice.
		Square vertices are labeled $0$.}
	\label{fig:tiles}
\end{figure}

\begin{proof}
Assume $c(v_0)=1$. We apply Lemma~\ref{lemma:coloring}(a) to the path $(u_0,v_1,v_0)$ to choose a color for $v_1$ in $\{\pm2,\pm3\}$,
then apply part (c) of the lemma to the path $(v_0,v_2,v_1)$
to choose $c(v_2)\in\{\pm1,\pm4\}$. Then $c(v_3)$ is chosen in $\{\pm2,\pm3\}$ using
part (a) of the lemma on the path $(u_1,v_3,v_2)$, and finally $c(v_4)\in\{\pm2,\pm3\}$ is chosen using part (a) on the path $(u_1,v_4,v_3)$.
Then we may choose $c(v_5)=1$ or $-1$ so that it is near or far from $c(v_4)$.
\end{proof}

A similar lemma concerns the $(4,6,12)$ lattice; see the proof in the Appendix.

\begin{lemma}
\label{lm:sqhexdodec}
	Let $G$ be the graph shown in Fig.~\ref{fig:sqhexdodectile}, and consider any edge labeling.
	Suppose that $c(u_i)=0$, for $i=0,\dots,4$, and $c(v_0)$ is a fixed color
	in $\{\pm2,\pm4\}$ that satisfies the label of $(v_0,u_0)$. Then we
	can extend this partial coloring to a coloring $c$ of all of $G$, so that $c$ is a
	(9,2)-threshold-coloring of $G$ with respect to the edge labeling, and $c(v_{10})$
	is in $\{\pm2,\pm4\}$.
\end{lemma}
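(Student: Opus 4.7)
The plan is to extend the partial coloring to the vertices $v_1,\ldots,v_{10}$ one at a time along the path that traces the patch in Fig.~\ref{fig:sqhexdodectile}, applying Lemma~\ref{lemma:coloring} at each step, in direct analogy with the proof of Lemma~\ref{lemma:tridodec}. For each $i\geq 1$, the vertex $v_i$ will have one previously-colored neighbor $v_{i-1}$ and at most one neighbor among $u_0,\ldots,u_4$ (all colored $0$), so it sits as the middle vertex of a 3-path whose endpoints are already colored. Depending on the colors of those endpoints, I invoke whichever of parts (a), (b), or (c) of Lemma~\ref{lemma:coloring} applies, using the freedom it grants to choose $c(v_i)$ from a small palette.

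For the first step, since $c(u_0)=0$ and $c(v_0)\in\{\pm2,\pm4\}\subseteq\{\pm2,\pm3,\pm4\}$, I apply part~(b) of Lemma~\ref{lemma:coloring} to the path $(u_0,v_1,v_0)$ and obtain $c(v_1)\in\{\pm2,\pm4\}$. I then proceed around the patch, at each step selecting the color of $v_i$ so that two things hold: (i) all edges incident to $v_i$ whose other endpoint is already colored are satisfied, which is exactly what Lemma~\ref{lemma:coloring} guarantees; and (ii) the magnitude of $c(v_i)$ lies in the set required by the hypothesis of the invocation I intend to use at $v_{i+1}$.

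The main obstacle is the bookkeeping needed to guarantee the endpoint condition $c(v_{10})\in\{\pm2,\pm4\}$. To get this from the last invocation, the final 3-path ending at $v_{10}$ must fit the template of part~(b), so $c(v_9)$ must lie in $\{\pm2,\pm3,\pm4\}$. This in turn constrains which part of Lemma~\ref{lemma:coloring} to apply at $v_9$, which constrains $v_8$, and so on. A short backwards induction along the specific neighbor structure of Fig.~\ref{fig:sqhexdodectile} shows that an alternation of parts (a) and (c) of the lemma, with occasional uses of (b) at vertices adjacent to a square, produces a consistent sequence of invocations whose hypotheses are met throughout and whose output satisfies $c(v_{10})\in\{\pm2,\pm4\}$. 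Once this schedule is fixed, the remaining verification is a routine check vertex by vertex along the patch.
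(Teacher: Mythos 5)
There is a genuine gap. Your plan assumes the patch of Fig.~\ref{fig:sqhexdodectile} can be swept as a path in which every new vertex $v_i$ has exactly one previously-colored $v$-neighbor plus possibly one neighbor colored $0$, so that some part of Lemma~\ref{lemma:coloring} always applies. That is the structure of the $(3,12^2)$ patch in Lemma~\ref{lemma:tridodec}, but not of the $(4,6,12)$ patch: here the square faces create a cycle among the $v$-vertices, so at some point a vertex (in the paper's labeling, $v_1$, adjacent to both $v_0$ and $v_3$) closes that cycle and has \emph{two} previously-colored neighbors, neither of which is $0$ and one of which ($v_0$) is fixed by hypothesis. None of parts (a)--(c) of Lemma~\ref{lemma:coloring} covers a middle vertex whose endpoints are, say, $2$ and $-4$; and in fact for certain edge labelings \emph{no} admissible color for that vertex exists given the colors already placed, so a purely forward greedy schedule fails no matter how you alternate the parts of the lemma.

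The paper's proof handles exactly this obstruction: it seeds the chain from the far end ($c(v_6)=1$), propagates colors to $v_5,v_4,v_3$ via Lemma~\ref{lemma:coloring}, and then resolves $v_1$ by an explicit case table over the labels of $(v_0,v_1)$ and $(v_1,v_3)$ and the value of $c(v_3)$. The table has entries where no color works, and in those cases the proof \emph{backtracks}: it negates $c(v_3),c(v_4),c(v_5),c(v_6)$ simultaneously (legal because that branch meets the rest of the patch only through vertices colored $0$), which moves to a table row where a valid color for $v_1$ exists. Your proposal contains no mechanism for this re-coloring step, and the ``short backwards induction'' you invoke cannot substitute for it, since the failure is not one of scheduling which part of Lemma~\ref{lemma:coloring} to use but of the lemma not applying at all at the cycle-closing vertex. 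To repair the argument you would need to add the case analysis and the sign-flip (or an equivalent two-sided construction meeting in the middle with a compatibility check).
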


\begin{theorem}
\label{thm:sqhexdodec}
The $(3,12^2)$ and $(4,6,12)$ lattices are (9,2)-total-threshold-colorable.
\end{theorem}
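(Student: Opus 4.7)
The plan is to reduce the theorem to Lemmas~\ref{lemma:tridodec} and~\ref{lm:sqhexdodec} by exhibiting a cover of each lattice by translates of the patches of Fig.~\ref{fig:tiles} whose ``anchor'' vertices (the $u_i$) are shared between adjacent patches and form a 2-independent set. Setting $c(v)=0$ on every anchor is well-defined regardless of the edge labeling precisely because no two anchors are adjacent, so no $N$/$F$ constraint between two anchors can arise. I would then color the remaining vertices, which lie on arcs around the dodecagons, by applying the appropriate patch lemma to one dodecagon at a time.

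The key observation that makes the induction go through is that the output color guaranteed by each lemma lies in the same restricted palette as the allowed input color for the next patch. For $(3,12^2)$, Lemma~\ref{lemma:tridodec} takes $c(v_0)=\pm 1$ as input and delivers $c(v_5)\in\{\pm 1\}$ as output, so when the next patch around an adjacent dodecagon is attached along a shared non-anchor vertex, the value $c(v_5)$ just computed can serve as the input $c(v_0)$ of that next patch, with the sign chosen freely and already fixed by the lemma to satisfy the shared edge's label. The analogous chaining for $(4,6,12)$ uses Lemma~\ref{lm:sqhexdodec}: its input palette $\{\pm 2,\pm 4\}$ is exactly its output palette, so adjacent patches glue along their endpoints. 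A seed patch can be initialized by any admissible choice of $c(v_0)$, which exists because both lemmas allow either sign of the starting color.

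The main obstacle will be the combinatorial bookkeeping: exhibiting an explicit fundamental domain for the translation group of each Archimedean tiling that matches the patch in Fig.~\ref{fig:tiles}, and verifying that every edge of the lattice is either internal to some patch (so that its constraint is handled by the lemma) or an edge between a shared non-anchor and the next patch (handled at the handoff), and never an edge between two anchors (excluded by 2-independence). Once this decomposition is in place, every edge is satisfied by construction, and the colors used all lie in $\{0,\pm 1,\pm 2,\pm 3,\pm 4\}$, a set of $9$ consecutive integers, proving $(9,2)$-total-threshold-colorability of both lattices.
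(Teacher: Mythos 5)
Your proposal matches the paper's proof in essence: both tile the lattice into copies of the patches of Lemmas~\ref{lemma:tridodec} and~\ref{lm:sqhexdodec}, glue consecutive patches within a row by identifying the output vertex ($v_5$, resp.\ $v_{10}$) with the next patch's input vertex $v_0$ (exploiting that the output palette equals the input palette), and connect rows only through the $0$-colored anchor vertices, which are pairwise non-adjacent so no labeling constraint between them can arise. The only part you defer --- the explicit row-by-row assembly verifying that every lattice edge lies inside exactly one patch --- is precisely the bookkeeping the paper carries out, and it goes through; note also that plain independence of the anchors suffices here, not $2$-independence.
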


\begin{proof}
We prove the claim for $(3,12^2)$; see Appendix for the $(4,6,12)$  proof.

First, we join several copies of the graph $G$ in Lemma~\ref{lemma:tridodec}. Let
$G_1,\dots,G_n$ be copies of $G$. Let us call $u_{i,k}$ and $v_{j,k}$
the vertices in $G_k$, corresponding to
$u_i,v_j$ ($i=0$ or $1,0\leq j\leq 5$). For $1\leq k<n$, we set $v_{5,k}=v_{0,k+1}$. This defines a single row of the ($3,12^2$) lattice.
We can construct a $(9,2)$-threshold-coloring of this chain of $G_1,\dots,G_n$ by giving
the vertex $v_{0,1}$ the color 1 and repeatedly applying Lemma~\ref{lemma:tridodec}.

To construct the next row, we add a copy of $G$ connected to $G_i$ and $G_{i+2}$ for each odd $i$ with
$1\leq i\leq k-2$, by identifying $u_{1,i}=u_0$ and $u_{0,i+2}=u_1$. We then join the copies of $G$ added
above the first row in the same way that the copies $G_1,\dots,G_n$ were joined. By repeatedly adding new
rows, we complete the construction of the ($3,12^2$) lattice. We can threshold-color each row,
and since the rows are connected only by vertices colored 0, the
entire graph is $(9,2)$-total-threshold-colorable; see Fig.~\ref{fig:tridodec}.
\end{proof}

\begin{figure}[t]
\vspace{.2cm}
\centering
	\subfigure[][]{
		\includegraphics[height=3cm, scale=0.65]{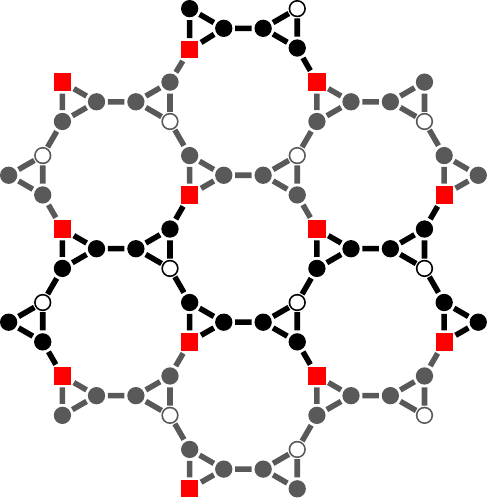}
		\label{fig:tridodec1}
	}
~~~~~~~
	\subfigure[][]{
		\includegraphics[height=3cm, scale=0.65]{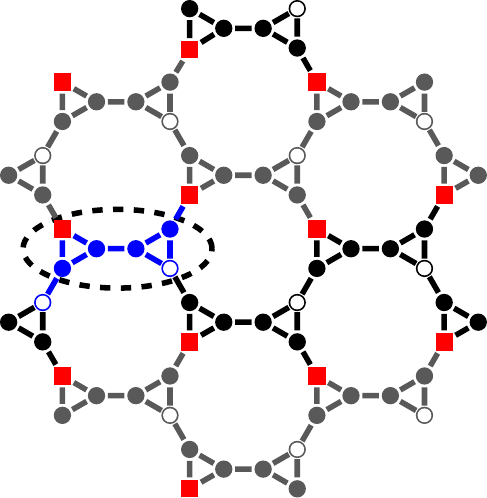}
		\label{fig:tridodec2}
	}
~~~~~~~
	\subfigure[][]{
		\includegraphics[height=3cm, scale=0.65]{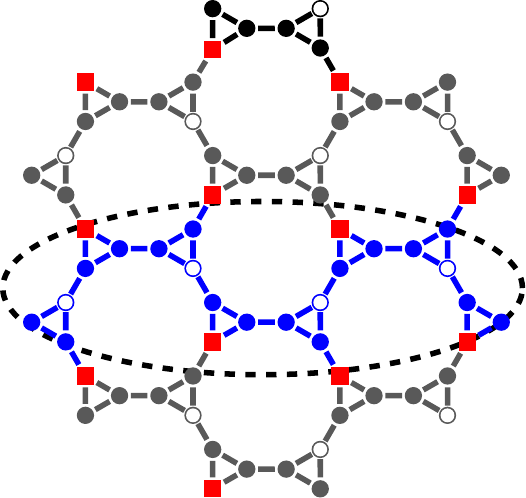}
		\label{fig:tridodec3}
	}
	\caption{Threshold-coloring the ($3,12^2$) lattice. (a) Identifying the
		rows separated by square vertices. (b) One patch has been colored, shown inside the oval. (c)
		Coloring an entire row.}
	\label{fig:tridodec}
\end{figure}

\subsection{The D($3^2,4,3,4$) and D($3^4,6$) Lattices}

Here we give an algorithm for threshold-coloring of the D($3^2,4,3,4$) and D($3^4,6$) lattices
using $\O(|V|)$ colors and $\O(|V|)$ threshold. By \emph{$k$-vertex}, we mean a vertex of degree $k$.
We use the following strategy. First, we construct an independent set $I$. For
the D($3^2,4,3,4$) lattice, $I$ consists of all the 4-vertices;
see Fig.~\ref{fig:penta1}(b).
For the D($3^4,6$) lattice, $I$ consists of all the 6-vertices and some 3-vertices;
see Fig.~\ref{fig:penta2}(a). Consider an edge labeling $l:E\rar\{N,F\}$.
We color all the vertices of $I$ using $|I|$ different colors such that each of these
vertices gets a unique color. Next we
color the remaining 3-vertices so that for each edge $e=(u,v)$ of the graph
$|c(u)-c(v)|\le|I|$ if and only if $l(e) = N$. By definition, this gives a threshold-coloring of the graph with
threshold $|I|$. Note that for both these lattices,
the 3-vertices remaining after the vertices in $I$ are removed induce a matching, that is, a set of edges with disjoint
end-vertices. We color these 3-vertices in pairs, defined by the matching.

We now describe the algorithm. Consider the graph $G_6$
with edges $e_0, \dots, e_4$ partitioned into near and far and
 coloring
 $c:\{w_1, w_2, w_3, w_4\}\rightarrow \{k+2, \ldots, 2k+1\}$ for some integer $k>0$
 such that each of the vertices gets a unique color; see Fig.~\ref{fig:penta1}(a).

 After possible renaming assume that
 if $l(e_1)\neq l(e_2)$ then $l(e_1) = N$, $l(e_2) = F$, and if $l(e_3)\neq l(e_4)$ then
 $l(e_3) = N$, $l(e_4) = F$. We say that
 $c$ is \textit{extendible} with respect to $l$ if at least one of the following conditions hold.

\begin{enumerate}
	\item $l(e_1)=l(e_2)$ or $l(e_3)=l(e_4)$; that is, at least one pair between
		$\{e_1, e_2\}$ and $\{e_3, e_4\}$ gets the same edge labeling from $l$.

	\item $l(e_0) = N$ and $c(w_1) < c(w_2)$ if and only if $c(w_3) < c(w_4)$.

	\item $l(e_0) = F$ and $c(w_1) < c(w_2)$ if and only if $c(w_3) > c(w_4)$.
\end{enumerate}

The following lemma proves that if $c$ is extendible with respect to $l$, then there is
a ($3k+1,k$)-threshold-coloring of $G_6$; see Appendix for the proof.

\begin{figure}[t]
\vspace{-.2cm}
\centering
\hspace{-.0cm}\includegraphics[height=2.8cm,width=1\textwidth]{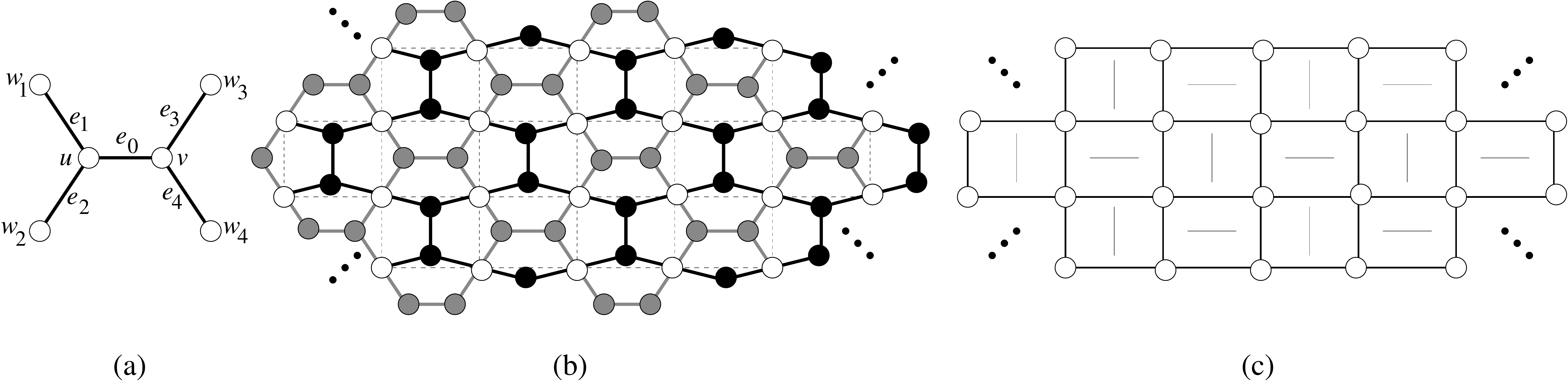}
\vspace{-.2cm}\caption{(a) The graph $G_6$, (b)--(c) Illustration for the proof of Theorem~\ref{th:penta1}.}
\label{fig:penta1}
\end{figure}

\begin{lemma}
\label{lem:linear-deg-3}
 Consider the graph $G_6$ in Fig.~\ref{fig:penta1}(a). Let $l:E\rightarrow \{N, F\}$
 be an edge labeling of $E$ and let $c:(V-\{u,v\})\rightarrow \{k+2, \ldots, 2k+1\}$
 be an extendible coloring with
 respect to $l$. Then there exist colors $c(u)$ and $c(v)$ for $u$ and $v$ from the
 set $\{1, \ldots, 3k+2\}$ such that $c$ is a threshold-coloring of $G$ for $l$
 with threshold $k$.
\end{lemma}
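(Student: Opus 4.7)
The plan is to analyze cases based on the labels $l(e_1),l(e_2),l(e_3),l(e_4)$, computing in each case the feasible set $A_u\subseteq\{1,\dots,3k+2\}$ of colors meeting the $e_1,e_2$ threshold constraints from $c(w_1),c(w_2)$, and analogously $A_v$ from $c(w_3),c(w_4)$; then choosing $(c(u),c(v))\in A_u\times A_v$ to honor $l(e_0)$.

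First I would record the geometry. Since every $c(w_i)$ lies in the middle band $M=\{k+2,\dots,2k+1\}$, for any $a\in M$ the colors in $\{1,\dots,3k+2\}$ near to $a$ are exactly $[a-k,a+k]\subseteq[2,3k+1]$, while both $1$ and $3k+2$ are far from $a$. Any two colors in $\{1,\dots,k+1\}$ are within $k$ of each other, the same holds in $\{2k+2,\dots,3k+2\}$, and any low-high pair is at distance at least $k+1$. These remarks classify $A_u$ into three shapes according to $(l(e_1),l(e_2))$: both near gives a middle interval $[\max(c(w_1),c(w_2))-k,\min(c(w_1),c(w_2))+k]$ of width at least $k+1$; both far gives $[1,\min-k-1]\cup[\max+k+1,3k+2]$, with both parts nonempty; mixed (after the renaming $l(e_1)=N$, $l(e_2)=F$) gives a single small interval lying in $\{1,\dots,k\}$ if $c(w_1)<c(w_2)$ and in $\{2k+3,\dots,3k+1\}$ if $c(w_1)>c(w_2)$. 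The same trichotomy applies to $A_v$.

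The combination step splits along the extendibility dichotomy. When condition~1 holds, at least one side has either a wide middle interval or a two-component far set, and I would verify the handful of sub-patterns NN--NN, NN--FF, NN--mixed, FF--FF, FF--mixed by picking suitable interval endpoints and invoking the distance bounds above. When condition~1 fails, both pairs are mixed, so $A_u$ and $A_v$ are each a single low-or-high interval, and conditions~2 and~3 encode exactly that these intervals lie on the same side precisely when $l(e_0)=N$ and on opposite sides precisely when $l(e_0)=F$; same-side pairs lie within $k$ (giving near) and opposite-side pairs exceed distance $k$ (giving far), so the $e_0$ constraint is automatic.

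The main obstacle is the tight NN--NN subcase with $l(e_0)=F$: both $A_u$ and $A_v$ are middle intervals, and one must exhibit $(c(u),c(v))$ at distance strictly greater than $k$. The key estimate is
$$(\max A_u-\min A_v)+(\max A_v-\min A_u)=4k-(d_{12}+d_{34})\ge 2k+2,$$
where $d_{12},d_{34}\in\{1,\dots,k-1\}$ are the spans of $\{c(w_1),c(w_2)\}$ and $\{c(w_3),c(w_4)\}$ inside $M$; so at least one of the two quantities is at least $k+1>k$, and the corresponding extremal pair works. The remaining sub-cases reduce to analogous interval arithmetic, using only that the four $c(w_i)$ are distinct elements of the size-$k$ band $M$ and that $\{1,\dots,3k+2\}$ strictly contains the near-neighbourhood of $M$ on both sides.
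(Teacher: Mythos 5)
Your proposal is correct and follows essentially the same route as the paper's proof: both determine, independently for $u$ and $v$, which colors of $\{1,\dots,3k+2\}$ satisfy the constraints coming from $(e_1,e_2)$ and $(e_3,e_4)$, and both invoke extendibility only in the doubly-mixed case, where each admissible set is a single interval pinned to the low or high end and conditions 2--3 force the two intervals onto the same or opposite sides exactly as $l(e_0)$ requires. The paper packages this as an explicit table of assignments (its values $1$, $k+1$, $2k+2$, $3k+2$, $\lambda_{1,2}$, $\lambda_{3,4}$ are just endpoints of your feasible intervals), whereas you classify the feasible sets by shape, but the underlying case analysis is the same.
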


\begin{theorem}
\label{th:penta1}
The D($3^2,4,3,4$) lattice is (3m+2,m)-total-threshold-colorable with $m$ equal to the
number of 4-vertices in the lattice.
\end{theorem}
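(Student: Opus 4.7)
The plan is to execute the strategy sketched just before the theorem statement. Take $I$ to be the set of all $m$ 4-vertices of the lattice, with $T$ the remaining 3-vertices. Inspection of the D($3^2,4,3,4$) tiling shows that 4-vertices are pairwise non-adjacent (so $I$ is independent) and that $G[T]$ is 1-regular, i.e. a perfect matching. For each matched edge $(u,v)\in T$, the subgraph induced on $\{u,v\}$ together with the four 4-vertex neighbors is a copy of the graph $G_6$ of Fig.~\ref{fig:penta1}(a), and distinct copies overlap only in the shared 4-vertices.

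Given any edge labeling $l$, the proof then proceeds in three steps. First, I would assign $m$ distinct colors from $\{m+2,\ldots,2m+1\}$ to the 4-vertices; the assignment is the crucial degree of freedom and will be fixed in the next step. Second, for each matched pair $(u,v)$ I would invoke Lemma~\ref{lem:linear-deg-3} with $k=m$ to obtain colors $c(u),c(v)\in\{1,\ldots,3m+2\}$ realizing all five $G_6$ edges with threshold $m$. Because different $G_6$ patches share only already-colored 4-vertices, these local colorings glue into a single $(3m+2,m)$-threshold-coloring of the whole lattice.

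The main obstacle is the hypothesis of Lemma~\ref{lem:linear-deg-3}: the assignment of colors to the four 4-vertex neighbors of every matched pair must be \emph{extendible} with respect to $l$. Here I would exploit the row-like organization of the D($3^2,4,3,4$) lattice illustrated in Fig.~\ref{fig:penta1}(b)--(c): the 4-vertices line up along horizontal chains, and each matched 3-vertex pair sits between two consecutive chains with two neighbors above and two below. By ordering the 4-vertices along each chain monotonically (and alternating the direction of the ordering from chain to chain if needed), the relative order of $(w_1,w_2)$ versus $(w_3,w_4)$ around every matched pair can be controlled. I would then verify, case-analyzing on $l(e_0)$, that in each $G_6$ either the orderings agree (satisfying condition~(2) when $l(e_0)=N$ or condition~(3) when $l(e_0)=F$) or one of the pairs $\{e_1,e_2\}$, $\{e_3,e_4\}$ shares a label (satisfying condition~(1)). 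This uniform extendibility check across all patches is the delicate step; once it is established the threshold-coloring is completed by Lemma~\ref{lem:linear-deg-3} applied patch by patch.
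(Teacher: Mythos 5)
Your overall architecture matches the paper's: same decomposition into the independent set of 4-vertices and the matching of 3-vertices, same identification of each matched pair plus its four 4-vertex neighbors with the graph $G_6$, and same plan to inject the 4-vertices into $\{m+2,\ldots,2m+1\}$ and finish patch-by-patch with Lemma~\ref{lem:linear-deg-3}. The gap is in the one step you yourself flag as delicate: how to choose the injective coloring of the 4-vertices so that it is extendible for \emph{every} matched pair simultaneously. Your proposal --- order the 4-vertices monotonically along each horizontal chain and possibly flip the direction from chain to chain --- does not have enough degrees of freedom. The extendibility conditions (2) and (3) are \emph{per-square} constraints: for each vertical matched pair the relative order of its top pair of 4-vertices must agree or disagree with that of its bottom pair according to the label $l(e_0)$ of that particular matched edge, and an adversarial labeling can make condition (1) fail everywhere while mixing ``agree'' and ``disagree'' requirements among the squares lying between the same two chains. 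A monotone ordering per chain forces all of those squares to receive the same relative orientation, so it cannot satisfy a mixed pattern; the horizontal squares impose analogous independent constraints on the inter-chain comparisons.

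The paper resolves this with an idea absent from your sketch: build an auxiliary square grid $H$ on the 4-vertices, translate each extendibility requirement into a constraint tying the orientation of one edge of a square of $H$ to the opposite edge, and then construct a constraint-satisfying \emph{acyclic orientation} of $H$ by sweeping from top-left to bottom-right, at each square orienting the bottom and right edges so that the square's constraint holds and its bottom-right vertex is a source or a sink (the invariant that yields acyclicity). A topological sort of this DAG then produces the injective coloring, which is extendible by construction. Without some such adaptive, labeling-dependent construction --- or at least a proof that the per-square constraint system is always realizable by an acyclic orientation --- your argument does not go through.
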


\begin{proof}
Let $G$ be a subgraph of D($3^2,4,3,4$) and let $l$ be an
edge labeling of $G$. Let $m$ be the
number of 4-vertices in $G$. Assign the threshold $t = m$. The remaining vertices $V_2$ of $G$ have degree 3 and they form a matching.
Each edge $(u,v)$ between these vertices is surrounded by exactly four 4-vertices,
which are the other neighbors of $u$ and $v$; see Fig.~\ref{fig:penta1}(b). Call this edge
\textit{horizontal} if it is drawn horizontally in Fig.~\ref{fig:penta1}(b); otherwise call it
\textit{vertical}. Our goal is to color the vertices of $V_1$ so that for each horizontal
and vertical edge of $G$, this coloring is extendible with respect to $l$.

Consider only the 4-vertices $V_1$ of $G$ and add an edge between two of them if
 they have a common neighbour in $G$. This gives a square grid $H$; see Fig.~\ref{fig:penta1}(c).
 Each square $S$ of $H$ is \textit{horizontal} (\textit{vertical}) if
 it is associated with a horizontal (vertical) edge in $G$.
 Let $u_1$, $u_2$, $u_3$ and $u_4$ be the left-top, right-top, left-bottom and right-bottom
 vertices of $S$ and let $c_1$, $c_2$, $c_3$ and $c_4$ be the colors assigned to them.
 Suppose $S$ is a vertical square. Then in
 order to make the coloring extendible with respect to $l$, we need that $c_1<c_2$ or
 $c_1>c_2$ implies exactly one of the two relations $c_3<c_4$ and $c_3>c_4$, depending
 on the edge-label of the associated vertical edge. Similarly if $S$ is a horizontal square then
 the relation between $c_1$ and $c_3$ implies a relation between $c_2$ and $c_4$ depending
 on the edge label of the associated horizontal edge. Consider that an edge in $H$ is directed
 from the vertex with the smaller color to the vertex with the larger color. Then for the coloring
 to be extendible to $l$, we need that for a vertical square $S$ the direction of the edge
 $(u_3,u_4)$ is the same as or opposite to that of $(u_1, u_2)$ and for a horizontal edge the
 direction of $(u_2,u_4)$ is the same as or opposite to that of $(u_1,u_3)$, depending on the
 edge-label of the associated vertical or horizontal edge. We call this a \textit{constraint}
 defined on $S$. We now show how to find an acyclic orientation of $H$ so that the constraints defined
 on the squares are satisfied.

We traverse the square grid $H$ from left-top to right-bottom. We thus assume that when
 we are traversing a particular square $S$, the orientations of its top and left edge have already
 been assigned. We now orient the bottom and right edge so that the
 constraint defined on $S$ is satisfied. We also maintain an additional invariant that
 the right-bottom vertex of each square is either a source or a sink; that is, the incident edges
 are either both outgoing or both incoming. Consider the traversal of a particular square $S$.
 If $S$ is vertical, then the direction of the bottom edge is defined by the direction of the
 top-edge and the constraint for $S$.
 We then orient the right edge so that the right-bottom vertex is either a source
 or a sink; that is, we orient the right edge upward (downward resp.) if the bottom edge is directed
 to the left (right resp.). Similarly if $S$ is horizontal, the direction of the right edge is defined
 by the constraint and we give direction to the bottom edge so that the right-bottom vertex is
 either a source or a sink. We thus have an orientation of the edges of $H$ satisfying all the
 constraints at the end of the traversal. It is easy to see that this orientation defines a directed
 acyclic graph. For a contradiction assume that there is a directed cycle $C$ in $H$. Then take
 the bottommost vertex $x$ of $C$ which is to the right of every other bottommost vertex. Then
 $x$ is either a source or a sink by our orientation and hence
 cannot be part of a directed cycle, a contradiction.

Once we have the directed acyclic orientation of $H$, we compute the coloring
 $c:V_1\rightarrow\{1,\ldots,m\}$ of the vertices $V_1$ of $G$ in a topological
 sort of this directed acyclic graph. We shift this color-space to $\{m+2, \ldots, 2m+1\}$
 by adding $m+1$ to each color. This coloring
 is extendible to the edge labeling $l$ since the orientation satisfies all the constraints.
 Thus by Lemma~\ref{lem:linear-deg-3} we can color all the 3-vertices of $G$,
 taking $k=m$. We thus have a threshold-coloring of $G$ with $3m+2$ colors and a threshold
 $m$.
\end{proof}

A similar result holds for the D($3^4,6$) lattice; see Appendix for the proof.
\begin{theorem}
\label{th:penta2}
The D$(3^4,6)$ lattice is total-threshold-colorable with $\O(|V|)$
threshold and $\O(|V|)$ colors, where $V$ is the vertex set.
\end{theorem}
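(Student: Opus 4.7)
The plan is to adapt the strategy of Theorem~\ref{th:penta1} to the lattice $D(3^4,6)$. As previewed at the start of Section~2.3, I would take $I$ to be the union of all 6-vertices with a carefully chosen subset of 3-vertices, arranged so that (i) $I$ is independent in $D(3^4,6)$, and (ii) the 3-vertices outside $I$ induce a perfect matching $M$. The 6-vertices alone already form an independent set, because in the $(3^4,6)$ tiling no two hexagons share an edge; the additional 3-vertices can be selected one per symmetry orbit in the triangle bands separating consecutive hexagons. Each edge of $M$ then has exactly four neighbors in $I$, so the two endpoints of a matched edge together with these four $I$-neighbors form a copy of the graph $G_6$ of Fig.~\ref{fig:penta1}(a).

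Let $k = |I| \in \O(|V|)$ and set threshold $t = k$. It suffices to color the vertices of $I$ with pairwise distinct values from $\{k+2,\dots,2k+1\}$ such that for every edge $(u,v) \in M$ the induced coloring of the four $I$-neighbors of $\{u,v\}$ is extendible with respect to $l$ in the sense of Lemma~\ref{lem:linear-deg-3}; that lemma then colors each matched pair using values in $\{1,\dots,3k+2\}$, yielding the claimed $(3k+1,k)$-threshold-coloring. To produce the injective coloring of $I$, I build an auxiliary graph $H$ on vertex set $I$ whose bounded faces are in bijection with $M$, so that the four corners of each face are the four $I$-neighbors of the corresponding matched edge. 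A total order on $I$ by colors becomes an acyclic orientation of $H$, and the extendibility condition from Lemma~\ref{lem:linear-deg-3} translates into a constraint local to each face of $H$. Sweeping $H$ face by face in a fixed direction, I would orient the two free edges of each new face so as to satisfy the extendibility constraint while keeping a distinguished corner of the face a local source or sink; the source-or-sink invariant precludes directed cycles, exactly as in Theorem~\ref{th:penta1}, and a topological sort of the resulting acyclic orientation provides the desired coloring of $I$.

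The main obstacle is verifying that the sweep always succeeds on the hexagonally arranged $H$, which is structurally richer than the square grid used for $D(3^2,4,3,4)$. Concretely, for every face one must check that, whatever orientations are imposed on the incoming edges by the previously visited faces, the two outgoing edges can simultaneously be oriented to meet both the local constraint prescribed by $l$ and the source-or-sink invariant. Once this local case analysis is completed, the remainder of the argument (acyclicity, topological sort, and extension via Lemma~\ref{lem:linear-deg-3}) transfers verbatim from the proof of Theorem~\ref{th:penta1}; both the number of colors $3k+2$ and the threshold $k$ are then linear in $|V|$, as claimed.
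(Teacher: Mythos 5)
Your proposal follows essentially the same route as the paper's proof: the same independent set $I$ of all 6-vertices plus selected 3-vertices, the same reduction of each matched pair to the graph $G_6$ and Lemma~\ref{lem:linear-deg-3}, and the same acyclic orientation of an auxiliary graph on $I$ obtained by a sweep maintaining a source-or-sink corner on each face. The one step you flag as open---that the sweep succeeds on the ``hexagonally arranged'' $H$---is resolved in the paper by observing that $H$ is the D$(3,6,3,6)$ lattice, whose faces are quadrilaterals with one constrained edge pair and one free edge per face, so the local analysis carries over from the square grid of Theorem~\ref{th:penta1} essentially unchanged.
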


\section{Non-Total-Threshold-Colorable Lattices}
In this section, we consider several lattices that cannot be threshold-colored.
 We begin with a useful lemma.

\begin{lemma}
\label{lemma:splitting}
	Consider a $K_3$ defined on $\{v_0,v_1,v_2\}$ and a 4-cycle $(u_0,u_1,u_2,u_3,u_0)$.
	Then for a given threshold $t$, a threshold-coloring $c$ and edge labeling $l$:
	\begin{enumerate}
		\item[(a)] Let $l(v_0,v_2)=F$ and $l(v_0,v_1)=l(v_1,v_2)=N$.
			If $c(v_0)<c(v_1),$ then $c(v_1)<c(v_2)$.
		\item[(b)] Let $l(v_0,v_2)=N$ and $l(v_0,v_1)=l(v_1,v_2)=F$.
			If $c(v_0)<c(v_1),$ then $c(v_2)<c(v_1)$.
		\item[(c)] Let $l(u_0,u_3)=l(u_2,u_3)=F$ and $l(u_0,u_1)=l(u_1,u_2)=N$.
			If $c(u_0)<c(u_3)$, then $c(u_1)<c(u_3)$ and $c(u_2)<c(u_3)$.
		\item[(d)] Let $l(u_0,u_1)=l(u_2,u_3)=F$ and $l(u_0,u_3)=l(u_1,u_2)=N$.
			If $c(u_0)<c(u_1)$, then $c(u_0)<c(u_2)$, $c(u_3)<c(u_1),$ and $c(u_3)<c(u_2)$.
	\end{enumerate}
	Note that we can replace $<$ with $>$ in each case.
\end{lemma}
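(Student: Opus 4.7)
The plan is to derive all four statements by direct case analysis on relative orderings, using only the defining inequalities of threshold-coloring: $|c(u)-c(v)| \le t$ for near edges and $|c(u)-c(v)| > t$ for far edges.

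For part (a), I would assume for contradiction that $c(v_1) \ge c(v_2)$ and split on whether $c(v_2) \ge c(v_0)$ or $c(v_2) < c(v_0)$. The first sub-case gives $c(v_2) - c(v_0) \le c(v_1) - c(v_0) \le t$, contradicting the far edge $(v_0,v_2)$; the second sub-case forces $c(v_1) - c(v_2) > t$ by adding $c(v_1) - c(v_0) \ge 0$ to $c(v_0) - c(v_2) > t$, contradicting the near edge $(v_1,v_2)$. Part (b) is the mirror image: assuming $c(v_2) \ge c(v_1)$, the two far edges $(v_0,v_1)$ and $(v_1,v_2)$ together yield $c(v_2) - c(v_0) > 2t \ge t$, contradicting the near edge $(v_0,v_2)$.

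Parts (c) and (d) follow the same template, applied to a 4-cycle rather than a triangle. In (c), propagating the near edges from $u_0$ yields $c(u_1) \le c(u_0)+t$ and $c(u_2) \le c(u_0)+2t$; since the far edge $(u_0,u_3)$ forces $c(u_3) > c(u_0)+t$, the conclusion $c(u_1) < c(u_3)$ is immediate, and $c(u_2) < c(u_3)$ follows by ruling out $c(u_2) \ge c(u_3)$, which via the far edge $(u_2,u_3)$ would demand $c(u_2) > c(u_3)+t > c(u_0)+2t$. For (d), the two near edges give $c(u_3) \le c(u_0)+t$ and $c(u_2) \ge c(u_1)-t$, so the far edge $(u_0,u_1)$ directly implies $c(u_3) < c(u_1)$ and $c(u_0) < c(u_2)$; ruling out $c(u_3) \ge c(u_2)$ then uses the remaining far edge $(u_2,u_3)$ together with $c(u_3) < c(u_1)$ to produce the contradiction $c(u_3) > c(u_2)+t \ge c(u_1)$.

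The only real obstacle is bookkeeping across the sub-cases, but since $t \ge 0$ is an integer and every inequality is strict on the far side and non-strict on the near side, the arithmetic is mechanical and no single step is subtle. The parenthetical remark that $<$ can be replaced by $>$ throughout is immediate by applying the whole argument to the coloring $-c$, which is again a threshold-coloring with the same threshold and labeling and which reverses every order relation.
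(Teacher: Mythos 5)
Your proof is correct and follows essentially the same route as the paper's: direct manipulation of the near/far inequalities with a short case analysis for each part (the paper's own argument for (c) and (d) is nearly word-for-word what you describe). Your handling of the boundary cases $c(v_1)=c(v_2)$ and $c(u_2)=c(u_3)$ is in fact slightly more careful than the paper's, and the $-c$ trick for the final remark is the intended justification.
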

\begin{proof}
  \begin{enumerate}
    \item[(a)] Suppose that $c(v_0)<c(v_1)$. Then $c(v_1)-t\leq c(v_0) < c(v_1)$. If $c(v_2)<c(v_1)$,
		then also $c(v_1)-t\leq c(v_2)< c(v_1)$, but then $|c(v_0)-c(v_2)|\leq t$, a contradiction.
		Thus $c(v_1)<c(v_2)$.
    \item[(b)] Suppose that $c(v_0)<c(v_1)$. If $c(v_2)>c(v_1)$, then $c(v_0)<c(v_1)<c(v_2)$ and $|c(v_0)-c(v_2)|\leq t$, so
      $|c(v_0)-c(v_1)|\leq t$, a contradiction. Hence, $c(v_2)<c(v_1)$.
    \item[(c)] Suppose that $c(u_0)<c(u_3)$. Then $c(u_0)<c(u_3)-t$ and $|c(u_0)-c(u_1)|\leq t$, so $c(u_1)<c(u_3)$,
		and therefore, $c(u_2)<c(u_3)+t$, so $c(u_2)$ must be less than $c(u_3)$ since $|c(u_2)-c(u_3)|>t$.
    \item[(d)] Suppose that $c(u_0)<c(u_1)$. Then $c(u_0)<c(u_1)-t$, $c(u_2)\geq c(u_1)-t$, and so $c(u_1)<c(u_2)$.
		$c(u_3)<c(u_1)$ since $|c(u_0)-c(u_3)|\leq t$.
		If $c(u_3)>c(u_2),$ then $c(u_1)-t\leq c(u_2) < c(u_3) < c(u_1)$, so $|c(u_2)-c(u_3)|\leq t$, a contradiction.
  \end{enumerate}
\end{proof}

\begin{figure}[t]
\vspace{-.2cm}	
\centering
	\subfigure[][]{
		\includegraphics[scale=1]{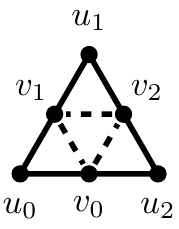}
		\label{fig:33336cex}
	}
    \hfill
	\subfigure[][]{
		\includegraphics[]{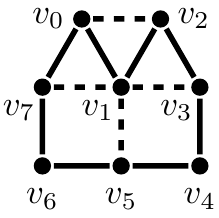}
		\label{fig:trisq}
	}
    \hfill
	\subfigure[][]{
		\includegraphics[scale=0.8]{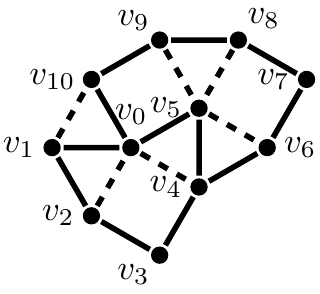}
		\label{fig:33434cex}
	}
	\hfill
	\subfigure[][]{
		\includegraphics[scale=1]{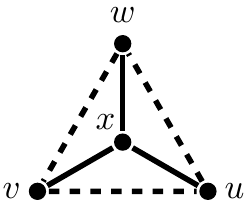}
		\label{fig:k4}
	}
	\hfill
	\subfigure[][]{
		\includegraphics[scale=0.7]{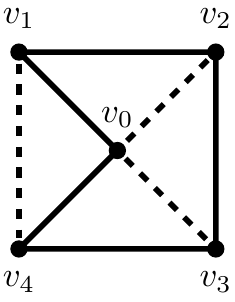}
		\label{fig:d488cex}
	}
	\caption{Non-total-threshold-colorable graphs with dashed edges labeled $F$
    and solid ones labeled $N$.
        (a)~A subgraph of ($3^6$) and ($3^4,6$).
		(b)~A subgraph of ($3^3,4^2$).
		(c)~A subgraph of ($3^2,4,3,4$).
		(d)~A subgraph of D($3,12^2$).
		(e)~A subgraph of D($4,6,12$) and D($4,8^2$).
	\label{fig:counterexamples}}
\end{figure}

\begin{theorem}
\label{thm:noncolorable}
The ($3^6$), ($3^4,6$), ($3^3,4^2$), ($3^2,4,3,4$), D($3,12^2$), D($4,6,12$), and D($4,8^2$)
lattices are non-total-threshold-colorable.
\end{theorem}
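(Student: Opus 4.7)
The plan is to exhibit, for each of the seven lattices, a specific induced subgraph together with a near/far edge labeling for which no threshold-coloring exists, regardless of the threshold $t$. The configurations needed are precisely those drawn in Fig.~\ref{fig:counterexamples}; since $(3^6)$ and $(3^4,6)$ share the counterexample in part~(a), and $D(4,6,12)$ and $D(4,8^2)$ share the one in part~(e), there are really only five subgraphs to analyze. In each case the base lattice contains the pictured subgraph as a subgraph, and any total-threshold-coloring of the lattice would restrict to a threshold-coloring of this subgraph under the displayed labeling, so it suffices to rule out the latter.

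For each of the five subgraphs the strategy is the same. Assume for contradiction that there is a threshold-coloring $c$ with threshold $t$. A far edge forces distinct colors on its endpoints, so we may fix a strict inequality $c(x)<c(y)$ on some seed edge; because Lemma~\ref{lemma:splitting} is symmetric under swapping $<$ with $>$, the choice of direction is without loss of generality. Then I propagate this seed inequality through the subgraph using Lemma~\ref{lemma:splitting}: each triangle labeled $FNN$ propagates a strict inequality along its third edge via part~(a), each triangle labeled $NFF$ via part~(b), and each 4-cycle with the labels in parts (c) and (d) propagates strict inequalities across its diagonals. Chaining these forced implications around a cycle of triangles and quadrilaterals in the subgraph eventually yields two contradictory orderings of the same pair of vertices, i.e.\ both $c(u)<c(w)$ and $c(w)<c(u)$.

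The main obstacle is the correctness check for each case: the labelings in Fig.~\ref{fig:counterexamples} have been engineered so that the implications close into a contradictory cycle with no free choices, and one must verify this individually. The subgraphs based on $K_4$ (Fig.~\ref{fig:counterexamples}(d)) and on the pinwheel of triangles and squares in $(3^2,4,3,4)$ (Fig.~\ref{fig:counterexamples}(c)) are the most delicate, since the seed inequality has to be propagated through three incident triangles in $K_4$ using parts~(a) and~(b), and through an alternating chain of triangles and 4-cycles in the pinwheel using all four parts of the lemma. The remaining cases reduce quickly: the triangular fan for $(3^6)$ and $(3^4,6)$ follows from a small number of applications of parts~(a) and~(b); the $(3^3,4^2)$ subgraph combines a single triangle with a single labeled 4-cycle, mixing parts~(a) and~(d); and the $D(4,8^2)$/$D(4,6,12)$ counterexample is purely a 4-cycle argument driven by part~(d). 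In every case the contradiction is independent of $t$, so no $(r,t)$-total-threshold-coloring can exist for any $r$ or $t$, establishing that each of the seven lattices is non-total-threshold-colorable.
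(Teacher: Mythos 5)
Your plan is the same as the paper's: restrict to the labeled subgraphs of Fig.~\ref{fig:counterexamples}, seed a strict inequality (WLOG by the symmetry of Lemma~\ref{lemma:splitting}), and propagate it with the four parts of that lemma until something breaks. So there is no divergence of method to discuss. The issue is that what you have written is a proof outline, not a proof: the entire content of the paper's argument is the five case-by-case verifications that the implications actually close up, and you explicitly defer exactly that step (``one must verify this individually''). Without those chains written out, nothing has been established.

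One concrete inaccuracy in the outline: you assert that in every case the chain of implications ``yields two contradictory orderings of the same pair of vertices.'' That is how the $K_4$ and $D(4,8^2)$ cases end, but not the others. For $(3^6)$ and $(3^4,6)$ the contradiction is metric, not order-theoretic: the all-far triangle forces $c(v_0)+2t<c(v_2)$, while the common near neighbor $u_2$ forces $|c(v_2)-c(v_0)|\leq 2t$. For $(3^3,4^2)$ the propagation ends by trapping both endpoints of a far edge within distance $t$ of a common vertex, and for $(3^2,4,3,4)$ the paper additionally needs a quantitative step ($c(v_9)\geq t+1$ together with the near edge $(v_9,v_{10})$ forcing $c(v_{10})>0$) that is not an application of Lemma~\ref{lemma:splitting} at all. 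So the lemma alone, applied as pure ordering propagation, does not suffice in all five cases; you would discover this when attempting the verifications. Finally, the degenerate threshold $t=0$ deserves a sentence (the paper dispatches it via a cycle with exactly one far edge), since Lemma~\ref{lemma:splitting}'s hypotheses become vacuous for near edges when $t=0$.
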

\begin{proof}
It is easy to see that a cycle with exactly 1 far edge is not $(r,0)$-threshold-colorable, so we need only prove the lattices are
not $(r,t)$-total-threshold-colorable for $t>0$. In this proof we assume that $r$ is an arbitrary integer and $t>0$.

The ($3^6$) and ($3^4,6$) lattices contain the subgraph $G$ in Fig.~\ref{fig:33336cex}.
Suppose there exists an $(r,t)$-threshold-coloring
$c$. Without loss of generality we may assume
that $c(v_0)<c(v_1)<c(v_2)$. Then $c(v_0)+t<c(v_1)$ and $c(v_1)+t<c(v_2)$,
so $c(v_0)+2t<c(v_2)$. Since the edges $(v_0,u_2)$ and $(v_2,u_2)$ are labeled $N$, we have
$|c(v_2)-c(v_0)|<|c(v_2)-c(u_2)|+|c(v_0)-c(u_2)|\leq2t$, which is a contradiction.

A subgraph of ($3^3,4^2$) is shown in Fig.~\ref{fig:trisq}. If $c$ is an $(r,t)$-threshold-coloring and w.l.o.g.
$c(v_0)<c(v_1)<c(v_2)$, then we repeatedly apply Lemma~\ref{lemma:splitting} to the vertices around the boundary. First we obtain $c(v_2)<c(v_3)$, and since $c(v_1)<c(v_3)$ we get $c(v_4)$ and $c(v_5)$ larger
than $c(v_1)$, which leads to $c(v_6)$ and $c(v_7)$ greater than $c(v_1)$. Then we must have $c(v_1)<c(v_0)<c(v_7)$, which means both $c(v_0)$ and $c(v_2)$ are in the set $\{c(v_1), c(v_1)-1,\dots, c(v_1)-t\}$,
contradicting the fact that the edge $(v_0,v_2)$ is labeled far.

For the ($3^2,3,4,3$) lattice, consider the graph in Fig.~\ref{fig:33434cex}.
Suppose there exists an $(r,t)$-threshold-coloring $c$.
Assume w.l.o.g. that $c(v_0) = 0 < c(v_1).$ By Lemma~\ref{lemma:splitting},
$c(v_2)$, $c(v_3)$, and $c(v_4)$ are positive.
Additionally, $c(v_0)<c(v_5)<c(v_4)<c(v_6)$,
and $c(v_7),c(v_8),c(v_9)$ must all be greater than $c(v_5)$. Since $c(v_5)>0$,
we have $c(v_9) \ge t+1$, and since the edge $(v_9,v_{10})$ is
labeled $N$ it must be that $c(v_{10}) > 0$. By Lemma~\ref{lemma:splitting}(a),
we have $c(v_{10})<c(v_0)<c(v_1)$, a contradiction.

D($3,12^2$) contains $K_4$ as a subgraph. Label the edges of $K_4$ so that
each edge on the outer face is far, and the other edges are near as in Fig.~\ref{fig:k4}.
Let $u,v,w$ be the vertices of the outerface, $x$ be the interior vertex, and assume an $(r,t)$-threshold-coloring $c$ exists.
Assume that $c(u)<c(x)$. From Lemma~\ref{lemma:splitting}(a), we then get
that $c(x)<c(v)$, which implies by the same lemma that $c(w)<c(x)$, and thus $c(x)<c(u)$, a contradiction.

 D($4,6,12$), and D($4,8^2$) contains the subgraph in Fig.~\ref{fig:d488cex}. Assume an $(r,t)$-threshold-coloring
 $c$ exists. Then without loss of generality say $c(v_4)<c(v_0)<c(v_1)$. By Lemma~\ref{lemma:splitting}(a)
 it follows that $c(v_1)<c(v_2)$ so $c(v_2)>c(v_0)$. By Lemma~\ref{lemma:splitting}(b) we have $c(v_3)>c(v_0)$ and
 thus $c(v_4)>c(v_0)$, a contradiction.
\end{proof}

\begin{wrapfigure}{r}{0.35\textwidth}
\vspace{-.5cm}
\centering
	\includegraphics[scale=.6]{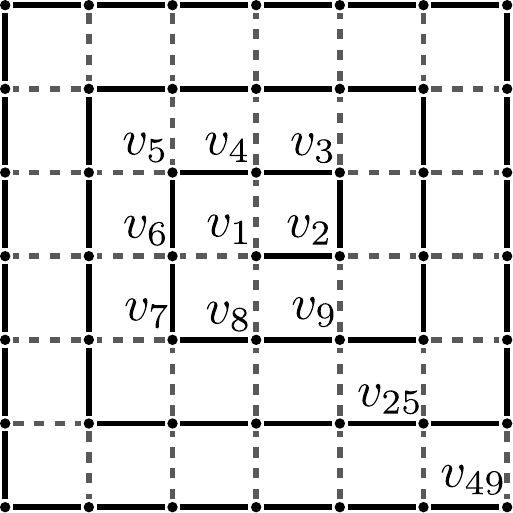}
	\caption{An example of a square lattice requiring an arbitrary number of colors. Dashed edges are far.
	\label{fig:sqspiral}}
\vspace{-1.4cm}
\end{wrapfigure}


\section{Graphs With Unbounded Colors}
We consider lattices, which are not $(r,t)$-total-threshold-colorable for any fixed
$r > 0$.

\begin{theorem}
\label{thm:346}
For every $r > 0$, there exists finite subgraphs of $(4^4)$, D$(3,4,6,4)$, and D$(3,6,3,6)$,
which are not $(r,t)$-total-threshold-colorable for any $t\geq0$.
\end{theorem}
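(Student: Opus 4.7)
The plan is to construct, for each of the three lattices and for each $r > 0$, a finite induced subgraph with an edge labeling that admits no $(r,t)$-threshold-coloring for any $t \geq 0$. The common construction is a \emph{spiral} of 4-faces analogous to the one depicted in Fig.~\ref{fig:sqspiral}, whose length will grow with $r$.

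A key observation is that all three lattices are quadrilateral tilings: $(4^4)$ is self-dual with square faces, while D$(3,4,6,4)$ and D$(3,6,3,6)$ have only 4-gonal faces because the vertices of the Archimedean lattices $(3,4,6,4)$ and $(3,6,3,6)$ have degree $4$ (each dual face corresponds to one such vertex). Consequently Lemma~\ref{lemma:splitting}(d) applies to every face of every lattice under consideration. In each lattice I embed a long chain of 4-faces winding in a spiral and sharing successive edges, and I label each such face with two opposite far edges and two opposite near edges, matching the hypothesis of part~(d) of the splitting lemma.

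Fix, without loss of generality, a strict inequality $c(u_0) < c(u_1)$ between the colors of two adjacent vertices at the innermost tip of the spiral. Iterating Lemma~\ref{lemma:splitting}(d) across the successive 4-faces propagates this inequality along the spiral, producing a chain of strict integer-valued inequalities whose length is proportional to the spiral length $L$. Because colors are integers, each strict inequality contributes at least $1$ to the spread, so the range of $c$ must exceed $L$; taking $L > r$ yields a finite subgraph that is not $(r,t)$-threshold-colorable for any threshold $t \geq 0$. The case $t = 0$ is handled separately and more easily by including a single cycle with exactly one far edge, as in the proof of Theorem~\ref{thm:noncolorable}.

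The main obstacle will be verifying that the spiral embeds and can be labeled coherently in the less regular Laves lattices D$(3,4,6,4)$ and D$(3,6,3,6)$, and that the propagated inequalities truly concatenate into a single monotone sequence rather than branching inconsistently across shared edges between consecutive 4-faces. I would resolve this by describing an explicit planar spiral in each lattice, analogous to Fig.~\ref{fig:sqspiral}: walk in the face-adjacency graph from one 4-face to the next across a shared edge, and orient the far/near pattern so that the direction of traversal always matches the \emph{far}-axis of Lemma~\ref{lemma:splitting}(d). The infinitude of each tiling guarantees that spirals of the required length exist, and the local labeling argument then carries over verbatim from the square-lattice case.
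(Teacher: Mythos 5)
Your setup is on the right track --- the paper also observes (implicitly, via the degree-4 vertices of $(3,4,6,4)$ and $(3,6,3,6)$) that all three lattices have quadrilateral faces, and it also builds a growing spiral as in Fig.~\ref{fig:sqspiral} and propagates inequalities with Lemma~\ref{lemma:splitting}. But there is a genuine gap in the core of your argument: you claim that every face can be labeled with two \emph{opposite} near edges and two \emph{opposite} far edges so that only part~(d) of Lemma~\ref{lemma:splitting} is needed, and that chaining~(d) along the spiral yields a monotone sequence of length proportional to the spiral. Neither half of this survives scrutiny. First, part~(d) applied to a single face only transfers the \emph{orientation} of one far edge to the parallel far edge ($c(u_0)<c(u_1)$ gives $c(u_3)<c(u_2)$); it does not force the colors to grow along a strip of such faces. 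Indeed, the labeling of $(4^4)$ in which every horizontal edge is near and every vertical edge is far has the opposite-pairs pattern on every face, yet it is $(3,1)$-threshold-colorable (alternate the rows between colors $0$ and $2$). So a chain of (d)-faces alone proves nothing about the range. Second, a spiral of near edges necessarily turns, and at each corner the near path uses two \emph{adjacent} edges of a face, so that face cannot have the opposite-pairs pattern; the paper handles exactly these corner faces with part~(c) of Lemma~\ref{lemma:splitting} (its ``second and third cases'').

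What actually makes the paper's spiral work is not a single chain of (d)-faces but the comparison \emph{across consecutive windings}: each face along an arm has one near edge on the current winding and one on the previous winding (vertices $v_j$ with $j\leq k^2$), so the inequality $c(v_{i-1})>c(v_{j-1})$ propagates outward and yields $c(v_{(k+2)^2})>c(v_{k^2})$ after each full turn. The forced number of colors is therefore the number of windings, roughly $\sqrt{L}$ for a spiral with $L$ faces --- still unbounded, but your accounting ``each strict inequality contributes at least $1$ to the spread, so the range exceeds $L$'' is not how the inequalities concatenate. Finally, for D$(3,6,3,6)$ the paper cannot use a near \emph{path} at all and instead uses a spiral caterpillar (Fig.~\ref{fig:d3636spiral}) with a separate case analysis; your assertion that the square-lattice argument ``carries over verbatim'' to that lattice is precisely the step that needs work. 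Your handling of $t=0$ via a cycle with one far edge matches the paper and is fine.
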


\begin{proof}
We prove the claim for the $(4^4)$ lattice (square grid); see Appendix for the rest.

By the comment in the proof of Theorem~\ref{thm:noncolorable}, we know that the $(4^4)$ lattice is not $(r,0)$-total-threshold-colorable
for any $r$.
Let $S$ be the infinite square grid, drawn as in Fig.~\ref{fig:sqspiral}.
A vertex $v$ in $S$ has north, east, south,
and west neighbors. If $P = (v_1,\dots,v_j)$ is a path in $S$, we call $P$ a \emph{north path}
if $v_{i+1}$ is the north neighbour of $v_i$ for all $1\leq i<j$. \emph{East, south}, and \emph{west paths} are
defined similarly and these paths are uniquely defined for a given start $v_i$ and number of vertices $j$.

For each odd $n>0$, we define a path $S_n=(v_1,\dots,v_{n^2})$ in $S$. Let
$S_1$ be the path consisting of a single chosen vertex $v_1$ of $S$. Let $k = n+2$, and recursively
construct $S_k$ from $S_n$ by first adding the east neighbour $v_{n^2+1}$ of $v_{n^2}$ to $S_n$.
Then, we add the north path $(v_{n^2+1},\dots,v_{n^2+k})$, the west path $(v_{n^2+k},\dots,v_{n^2+2k})$,
the south path $(v_{n^2+2k},\dots,v_{n^2+3k})$, and the east path $(v_{n^2+3k},\dots,v_{n^2+4k})$; see Fig.~\ref{fig:sqspiral}.

With $S_n$ defined for every odd $n$, let $G_n=(V_n,E_n)$ be the subgraph of $S$ induced by the vertices
of $S_n$, and let $l_n:E_n\rar\{N,F\}$ be an edge labeling such that $l_n(e) = N$ if and only if $e$ is
in $S_n$. The graph $G_7$ is shown in Fig.~\ref{fig:sqspiral}.
We now prove that $G_n$ requires at least $n$ colors to threshold-color, for any
threshold $t>0$. W.l.o.g. suppose that $c$ is a threshold coloring such that $c(v_4)>c(v_1)$. Note that the cycles
$(v_4,v_5,v_6,v_1)$, $(v_6,v_7,v_8,v_1)$, and $(v_8,v_9,v_2,v_1)$ match the cycles in Lemma~\ref{lemma:splitting},
implying that $c(v_6),c(v_8)$ and $c(v_9)$ are greater than $c(v_1)$. This serves as the basis
for induction. Suppose that for some odd $k>1$, the vertex $c(v_{k^2}) > c(v_{(k-2)^2})$ for any
assignment $c$ of colors to the vertices of $G_n$, so long as $c(v_4) > c(v_1)$ and $c$ is an $(r,t)$-threshold-coloring for some $r>0$.
Then we consider the color $c(v_i)$, for $k^2<i\leq(k+2)^2$. There are three cases. In the first,
$v_i$ is the interior vertex of a north, east, west, or south path in $S_{k+2}$. Then $v_i$ is on a cycle
$(v_{i-1},v_i,v_j,v_{j-1})$, $j\leq k^2$, with $l(v_i,v_{i-1}) = l(v_j,v_{j-1}) = N$ and $l(v_i,v_j) =
l(v_{i-1},v_{j-1}) = F$. By Lemma~\ref{lemma:splitting}, we have $c(v_i)>c(v_j)$ and $c(v_i)>c(v_{j-1})$ so long as $c(v_{i-1})
> c(v_{j-1})$. In the second case, $v_i$ is part of a 4 cycle $(v_{i-1},v_{i},v_{i+1},v_j)$, $j\leq k^2$,
with $l(v_{i-1},v_i) = l(v_i,v_{i+1}) = N$, and the other edges labeled $F$. Again by Lemma~\ref{lemma:splitting},
we have $c(v_i)>c(v_j)$ and $c(v_{i+1})>c(v_j)$ so long as $c(v_{i-1})>c(v_j)$. The third case is the same, except $v_i$ is
in the place of $v_{i+1}$.

Given these three cases and the assumption that $c(v_{k^2})>c(v_{(k-2)^2})$, we conclude that $c(v_{(k+2)^2})
>c(v_{k^2})$ for each odd $k > 1$. Therefore, the graph $G_n$, with edge labeling $l_n$,
requires a distinct color for each of $c(v_1),c(v_{3^2}),\dots,c(v_{n^2})$.
\end{proof}

\section{Conclusion and Open Questions}

Motivated by a fun Sudoku-style puzzle, we considered the
 threshold-coloring problem for Archimedean and Laves lattices. For
 some of these lattices, we presented new coloring algorithms, while for others we found subgraphs that cannot
 be threshold-colored. Several challenging open questions remain.
 While we showed that subgraphs of the square lattice and two others require unbounded
 number of colors, we do not know whether finite subgraphs thereof are
 threshold-colorable. In the context of the puzzle, it would be useful
 to find algorithms for checking threshold-colorability for a particular subgraph of a lattice,
 rather than checking all subgraphs, as required in total-threshold-colorability.
 There are other interesting variants of the problem pertinent to the puzzle. One restricts the
 problem by allowing only a fixed number of colors to assign to the vertices. Another fixes
 the colors of certain vertices, similar to fixing boxes in Sudoku.

\bibliographystyle{abbrv}
\bibliography{thresholdFun}

\newpage
\noindent{\bf Appendix}
\begin{figure}[h]
	\centering
	\includegraphics[scale=.65, trim={4.7cm 18.5cm 11.5cm 4cm}]{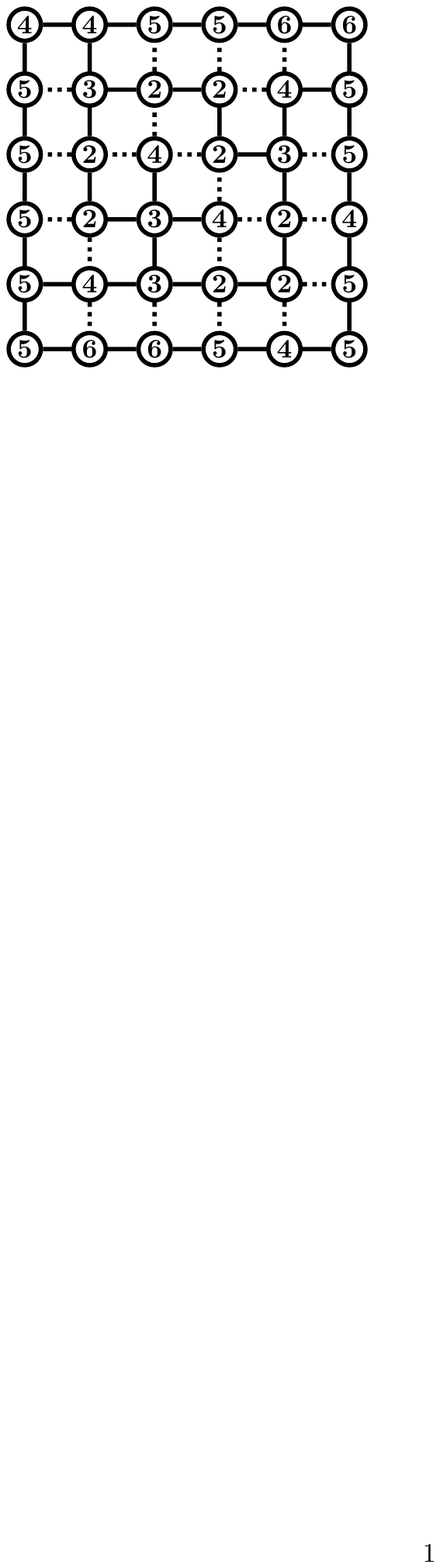}
	\label{fig:puzzle2}
	\caption{A solution to the puzzle given in Fig.~\ref{fig:puzzle}.}
\end{figure}

\smallskip\noindent Here we provide detailed proofs of theorems omitted from the body of
the paper.

\begin{figure}[h]
\vspace{-.0cm}\hspace{-.3cm}
	\subfigure[][]{
		\includegraphics[scale=0.67]{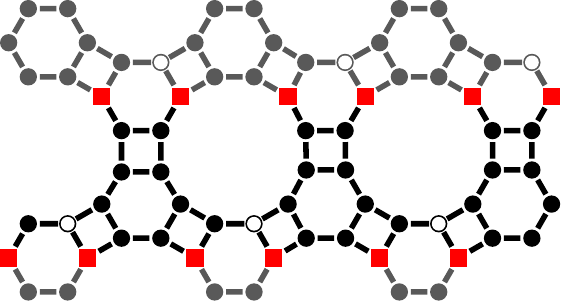}
		\label{fig:sqhexdodec1}
	}
	\subfigure[][]{
		\includegraphics[scale=0.67]{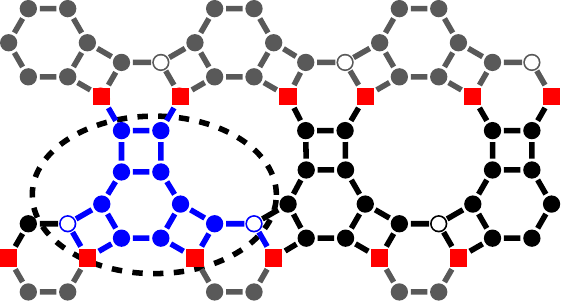}
		\label{fig:sqhexdodec2}
	}
	\subfigure[][]{
		\includegraphics[scale=0.68]{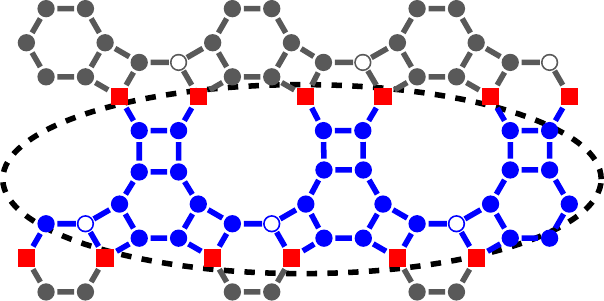}
		\label{fig:sqhexdodec3}
	}
	\caption{Threshold-coloring the ($4,6,12$) lattice. (a) Identifying the
		patches from lemma \ref{lm:sqhexdodec}. Observe that there are alternating ``rows''
		separated by square vertices. (b) One patch has been colored, shown inside the oval. (c)
		Extending the coloring to an entire row.}
	\label{fig:sqhexdodec}
\end{figure}

\begin{proof}[\textbf{Lemma~\ref{lm:sqhexdodec}}]
	Let $l$ be an edge labeling of $G$. We consider only
	the case where $c(v_0)\in \{2,4\}$ as the other case is symmetric. First, let $c(v_6)=1$.
	Using Lemma~\ref{lemma:coloring}, we color $v_5$, $v_4$, and $v_3$ so that $c(v_3)$ is in
	$\{\pm1,\pm4\}$. Consider Table~\ref{tab:color}, where we list valid colors of $v_1$ in $\{\pm2,\pm4\}$
	according to edge labeling and $c(v_3)$. An ``x'' indicates no
        color can be chosen, but in these cells we multiply $c(v_3)$ by $-1$ to obtain a color for $v_1$, and we multiply $c(v_4),c(v_5),$ and $c(v_6)$
	by $-1$ so that this is consistent.
  Use Lemma~\ref{lemma:coloring} to choose colors for $v_2$, $v_8$,
  $v_7$, $v_9$, and $v_{10}$ 
so that $v_{10}\in\{\pm2,\pm4\}$.
\end{proof}

\begin{table}
  \centering
  \begin{tabular}{|l|c|c|c|c|}\hline
    & $l(v_0,v_1)=N$ & $l(v_0,v_1)=N$ & $l(v_0,v_1)=F$ & $l(v_0,v_1)=F$ \\
	& $l(v_1,v_3)=F$ & $l(v_1,v_3)=N$ & $l(v_1,v_3)=N$ & $l(v_1,v_3)=F$ \\\hline
	
    $c(v_3)=1$ & 4   & 2   & x     & -4,-2\\
    $c(v_3)=-1$& 2,4 & x   & -2    & -4\\
    $c(v_3)=4$ & x   & 2,4 & x     & -4,-2\\
    $c(v_3)=-4$& 2,4 & x   & -4,-2 & x\\\hline
  \end{tabular}
\vspace{0.25cm}
  \caption[]{}
  \label{tab:color}
\end{table}

\begin{proof}[\textbf{Theorem~\ref{thm:sqhexdodec}}]
Let us prove the claim for the (4,6,12) lattice.

	Consider the graph $G$ from Lemma~\ref{lm:sqhexdodec}. We can construct the (4,6,12) lattice by joining together several copies of $G$. Suppose we have $k$
	copies of $G$, denoted $G_1,G_2,\dots,G_k$. If $v$ is a vertex corresponding to $v_i$ or $u_j$
	in $G_l$, then we denote $v$ by $v_{i,l}$ or $u_{j,l}$, $1\leq l\leq k$. Now, construct a row
	of the lattice from the copies of $G$ by setting $v_{10,l} = v_{0,l+1}$ and $u_{4,l}=u_{0,l+1}$.
	Lemma~\ref{lm:sqhexdodec} allows us to $(9,2)$-total-threshold-color any such chain,
	by fixing the color $v_{0,1}$ to be 2 or 4 (depending on the edge $(v_{0,1},u_{0,1})$, and then
	applying the lemma in sequence to $G_1,\dots,G_k$.
	
	Given two such chains $G_1,\dots,G_k$ and $G_1',\dots,G_{k-1}'$, we can stack the second on top
	of the first. First, if we have a vertex $v$ corresponding to $v_i$ or $u_j$ in the graph
	$G_l'$, then we denote $v$ by $v'_{i,l}$. Now, we join the two chains by
	identifying the vertex $u_{2,l}$ with $u'_{0,l}$ and the vertices $u_{1,l+1},u_{2,l+1}$ with
	$u'_{3,l}$ and $u'_{4,l}$, respectively. By repeatedly adding
        new rows, we can complete the construction of the (4,6,12)
        lattice; see Fig.~\ref{fig:sqhexdodec1}. The rows are connected only by vertices colored 0, so we
		can color each row to threshold color the entire lattice.
\end{proof}

\begin{proof}[\textbf{Lemma~\ref{lem:linear-deg-3}}]
Let $c(w_i)=c_i$ for $i\in\{1,2,3,4\}$. After possible renaming assume that
 if $l(e_1)\neq l(e_2)$ then $l(e_1) = N$, $l(e_2) = F$ and if $l(e_3)\neq l(e_4)$ then
 $l(e_3) = N$, $l(e_4) = F$. Define two integers $\lambda_{1,2}$ and
 $\lambda_{3,4}$ as follows. If $c_1<c_2$ then $\lambda_{1,2}=c_2-k-1$;
 otherwise $\lambda_{1,2}=c_2+k+1$.
 Similarly, if $c_3<c_4$ then $\lambda_{3,4}=c_4-k-1$;
 otherwise $\lambda_{3,4}=c_4+k+1$. For both
 values of $\lambda_{i,j}$ we have $|c_i-\lambda_{i,j}|\le k$;
 but $|c_j-\lambda_{i,j}|=k+1$. Also since $c_i\in\{k+2, \ldots,
 2k+1\}$, the value
 for $\lambda_{i,j}$ is in the set $\{1, \ldots, 3k+2\}$.
 Table~\ref{tab:assign} lists the colors $c(u)$ and $c(v)$ assigned to $u$ and $v$
 for all possible edge labeling $l$ assuming that the coloring for $V-\{u,v\}$ is extendible with
 respect to $l$.

\begin{table}[t]
\vspace{-.3cm}\hspace{-.2cm}
\begin{tabular}{|c|c||c|c|c|c|c|c|c|c|c|c|c|c|c|c|c|c|}

	\hline
	\multicolumn{2}{|c||}{} & 1 & 2 & 3 & 4 & 5 & 6 & 7 & 8 &
		9 & 10 & 11 & 12 & 13 & 14 & 15 & 16 \\
	\hline

	\multicolumn{2}{|c||}{$l(e_1)=$} & $N$ & $N$ & $N$ & $N$ & $N$ & $N$ & $N$ & $N$ &
		$F$ & $F$ & $F$ & $F$ & $F$ & $F$ & $F$ & $F$ \\
	\multicolumn{2}{|c||}{$l(e_2)=$} & $N$ & $N$ & $N$ & $N$ & $F$ & $F$ & $F$ & $F$ &
		$N$ & $N$ & $N$ & $N$ & $F$ & $F$ & $F$ & $F$ \\
	\multicolumn{2}{|c||}{$l(e_3)=$} & $N$ & $N$ & $F$ & $F$ & $N$ & $N$ & $F$ & $F$ &
		$N$ & $N$ & $F$ & $F$ & $N$ & $N$ & $F$ & $F$ \\
	\multicolumn{2}{|c||}{$l(e_4)=$} & $N$ & $F$ & $N$ & $F$ & $N$ & $F$ & $N$ & $F$ &
		$N$ & $F$ & $N$ & $F$ & $N$ & $F$ & $N$ & $F$ \\

	\hline\hline

	\multirow{2}{*}{\parbox{0.7cm}{$l(e_0)$ $=N$}} & $c(u)$ &
		$k+1$ & \parbox{0.8cm}{$k+1$ or $2k+2$} & \multirow{4}{*}{--} & $k+1$ &
		$\lambda_{1,2}$ & $\lambda_{1,2}$ & \multirow{4}{*}{--} & $\lambda_{1,2}$ &
		\multicolumn{4}{|c|}{\multirow{4}{*}{--}} &
		$1$ & \parbox{0.8cm}{$1$ or $3k+2$} & \multirow{4}{*}{--} & 1 \\
	\cline{2-4} \cline{6-8} \cline{10-10} \cline{15-16} \cline{18-18}
	 & $c(v)$ & $k+1$ & $\lambda_{3,4}$ &  & $1$ & \parbox{0.8cm}{$k+1$ or $2k+2$} &
		$\lambda_{3,4}$ &  & \parbox{0.8cm}{$1$ or $3k+2$} &
		 \multicolumn{4}{|c|}{}  & $k+1$ & $\lambda_{3,4}$ &  & $1$ \\

	\cline{1-4} \cline{6-8} \cline{10-10} \cline{15-16} \cline{18-18}

	\multirow{2}{*}{\parbox{0.7cm}{$l(e_0)$ $=F$}} & $c(u)$ &
		$k+1$ & \parbox{0.8cm}{$k+1$ or $2k+2$} &  & $k+1$ &
		$\lambda_{1,2}$ & $\lambda_{1,2}$ &  & $\lambda_{1,2}$ &
		\multicolumn{4}{|c|}{} & $3k+2$ & \parbox{0.8cm}{$1$ or $3k+2$} &  & 1 \\
	\cline{2-4} \cline{6-8} \cline{10-10} \cline{15-16} \cline{18-18}
	 & $c(v)$ & $2k+2$ & $\lambda_{3,4}$ &  & $3k+2$ &
		\parbox{0.8cm}{$k+1$ or $2k+2$} & $\lambda_{3,4}$ &  &
		\parbox{0.8cm}{$1$ or $3k+2$} &
		\multicolumn{4}{|c|}{} & $k+1$ & $\lambda_{3,4}$ &  & $3k+2$ \\
	\hline

\end{tabular}
\vspace{0.25cm}
\caption{Proof of Lemma~\ref{lem:linear-deg-3}: Assignment of $c(u)$ and $c(v)$ for all possible edge labeling $l$ of $G$ and coloring $c$ of $V-\{u,v\}$}
\label{tab:assign}
\end{table}

Here by our definition of $e_1$, $e_2$, $e_3$ and $e_4$, it is not possible that $l(e_1)=F$
 and $l(e_2)=N$, or $l(e_3)=F$ and $l(e_4)=N$. Again for each value of $\lambda_{i,j}$;
 $i,j = 1,2$~or~$3,4$, exactly one of the two values $k+1$ and $2k+2$
 is within distance $k$ of $\lambda_{i,j}$ and the other value is more
 than distance $k$ away. The same also holds
 for the two values $1$ and $3k+2$. Thus depending on the edge-label of $e$, there is exactly
 one valid choice for $c(u)$ and $c(v)$ in columns 2, 5, 8 and 14 of Table~\ref{tab:assign}.
 Finally in column 6, we claim that $\lambda_{1,2}$ and $\lambda_{3,4}$ are within
 distance $k$ of each other if and only if $l(e_0)=N$.
 Without loss of generality assume that $c_1<c_2$. Then
 $\lambda_{1,2}=c_2-k-1\in\{1, \ldots, k\}$ since $c_2\in\{k+2, \ldots, 2k+1\}$. Now since
 $c$ is extendible with respect to $l$, $c_3<c_4$ if and only if $l(e)=N$. Thus if $l(e)=N$,
 $\lambda_{3,4}=c_4-k-1\in\{1, \ldots, k\}$  since $c_4\in\{k+2, \ldots, 2k+1\}$. Therefore,
 $|\lambda_{1,2}-\lambda_{3,4}|<k$. On the other hand if $l(e_0)=F$, then $c_3>c_4$ and
 hence $\lambda_{3,4}=c_3+k+1\in\{2k+2, \ldots, 3k+1\}$.
 Thus $|\lambda_{1,2}-\lambda_{3,4}|>k$. Thus the assignment of $c(u)$ and $c(v)$ respects
 the edge labeling in all cases.
\end{proof}

\begin{proof}[\textbf{Theorem~\ref{th:penta2}}]
We threshold-color D($3^4,6$) with the same strategy as in
 Theorem~\ref{th:penta1}. Let $G$ be a particular instance of a D($3^4,6$) graph
 and let $l$ be a particular edge labeling of $G$. We construct an independent set
 $V_1$ of $G$ with all the 6-vertices and some 3-vertices; see
 Fig.~\ref{fig:penta2}(a). The remaining 3-vertices $V_2$ induce a matching
 in $G$. Let $m$ be the number of vertices in $V_1$. We first color these
 vertices with $m$ colors so that each of them gets a unique color. We assign
 threshold $t=m$. Each edge $(u,v)$ between two vertices from $V_2$ is surrounded
 by exactly four vertices from $V_1$ and they are the other neighbors of $u$ and $v$.
 We want to color the vertices of $V_1$ so that this coloring is extendible
 to the edge labeling $l$ for all such edges $(u,v)$, $u,v\in V_2$.

\begin{figure}[t]
\vspace{-.5cm}\centering
\includegraphics[width=0.85\textwidth]{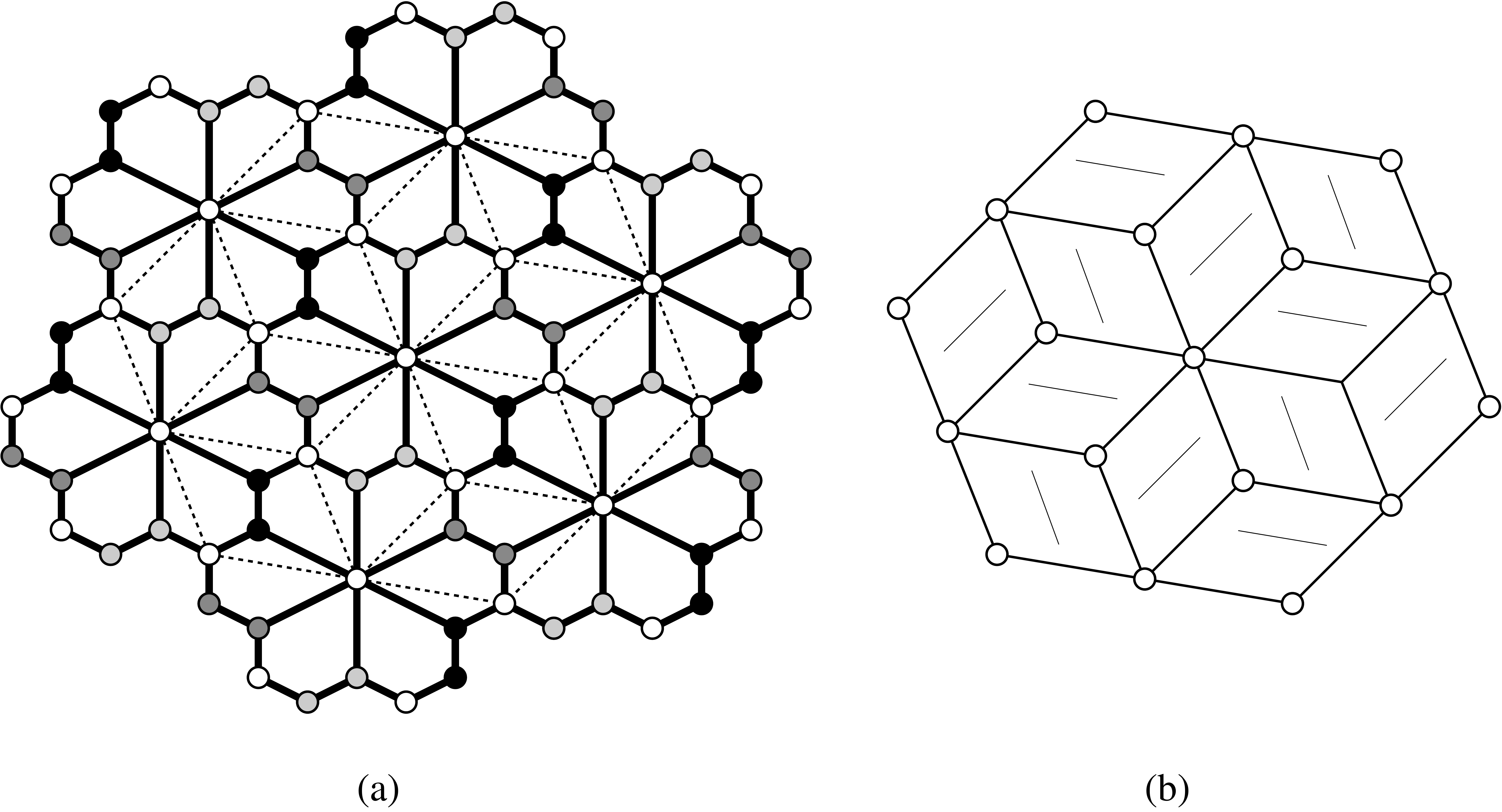}
\caption{Illustration for the proof of Theorem~\ref{th:penta2}.}
\label{fig:penta2}
\end{figure}

 We consider only the vertices of $V_1$ and add an edge between two vertices of $V_1$
 if and only if they have a common neighbour in $G$ to obtain a D($3,6,3,6$) grid $H$;
 see Fig.~\ref{fig:penta2}(b). As in Theorem~\ref{th:penta1}, if we
 direct each edge of $H$ from the vertex with smaller color to the vertex with larger color,
 the requirement of extendibility of coloring imposes some constraints on the edge directions
 between either the left and right edges or the top and bottom edges. We now
 orient all the edges in $H$ so that these constraints are satisfied and the
 graph $H$ becomes a directed acyclic graph. We traverse $H$ from left-top
 to right-bottom and
 orienting edges so that the constraints are satisfied and the right-bottom vertex for each rectangle
 in $H$ becomes either a source or sink. This orientation gives a desired directed acyclic graph
 from $H$ and we color the vertices of $V_1$ with a topological order of
 this DAG.
We again shift the color space from $\{1,\ldots, m\}$ to $\{m+2,\ldots,2m+1\}$
 and since this coloring is extendible, by Lemma~\ref{lem:linear-deg-3} we have a threshold
 coloring of $G$ with $3m+2$ colors and a threshold value of $m$.
\end{proof}

\begin{proof}[\textbf{Theorem~\ref{thm:346}}]
We first prove the claim for the D$(3,4,6,4)$ lattice.
Consider the edge labeling of D$(3,4,6,4)$ where the near edges
form a spiral-shaped path $P$; see Fig.~\ref{fig:d3464spiral}. Each face except the one with vertices labeled $v_1,v_2,v_3,v_4$
has exactly 2 near edges. We can add vertices from D$(3,4,6,4)$ to the end of this path, so that the near edges
are still a path and each newly added face has exactly 2 near edges. We label the vertices,
starting at $v_1$, using breadth first search, so that the successor of vertex $v_i$ is always
$v_{i+1}$. Suppose $c$ is a threshold-coloring
such that $c(v_1)<c(v_4)$. Now, consider a vertex $v_i$, $i>4$, with a neighbour $v_j$ such that $j<i$ and
$l(v_i,v_j)=F$. Then $v_i,v_j$ are on a face $(v_i,v_{i+1},v_k,v_j)$. There are 2 far edges; they are
either (i) $(v_i,v_j)$ and $(v_k,v_j)$ or (ii) $(v_i,v_j)$ and $(v_{i+1},v_k)$. In (i), if $c(v_k)>c(v_j)$,
then also $c(v_i)>c(v_j)$ by Lemma~\ref{lemma:splitting}, and in (ii), if $c(v_{i+1})>c(v_k)$ then
$c(v_i)>c(v_j)$, by the same lemma. Since $c(v_4)>c(v_1)$, we can inductively conclude that
$c(v_i)>c(v_j)$, and hence that as we add vertices to the path $P$, we require more and more colors.

We now prove the claim for the D$(3,6,3,6)$ lattice.
Consider the edge labeling of the D$(3,6,3,6)$ lattice shown in
Fig.~\ref{fig:d3636spiral}, with vertex set $V$. The set of near edges
forms a spiral-shaped caterpillar $T$. All but one interior face has
exactly 2 near edges. We extend
this caterpillar by adding vertices to the end of $T$, and labeling the edges so that every interior
face has exactly 2 near edges, except the one mentioned earlier. Let $l$ be this edge labeling. Now,
suppose $c$ is a threshold coloring of the caterpillar, and without loss of generality choose the
colors of $v_1,v_4$ in Fig.~\ref{fig:d3636spiral} so that $c(v_1)<c(v_4)$. We can traverse the vertices of $T$ using breadth
first search. Whenever we traverse a vertex with degree 3 in $T$, we traverse its degree 1 neighbour before we traverse its higher
degree neighbors. Assume that $v$ is a vertex with a neighbour $u$ traversed prior to $v$, such that $l(u,v)=F$ and $c(v)>c(u)$.
If $v$ is not a leaf, then $v$ is on a face $(x,y,v,u)$ either with (i) $x,y$ not traversed, and $l(x,u)=F$, or
(ii) $y$ not traversed, and $l(y,x)=F$. Since $l(u,v)=F$, the other edges are near. Then by Lemma~\ref{lemma:splitting},
we have in (i) that $c(x)>c(u)$ and in (ii) that $c(y)>c(x)$ (both since $c(v)>c(u)$). Now, since $c(v_4)>c(v_1)$,
we can traverse $T$ and conclude that each vertex $y$ with a previously traversed vertex $x$ such that $l(x,y)=F$
must have $c(y)>c(x)$. Therefore, for any $r>0$, as the caterpillar $T$ grows arbitrarily big, we require more
than $r$ colors to threshold-color it.
\end{proof}

\begin{figure}
\vspace{-.5cm}	
\centering
\subfigure[][]{
		\includegraphics[scale=.7]{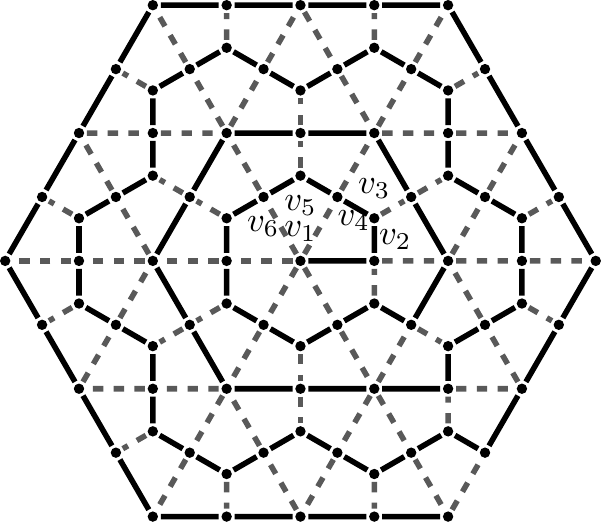}
		\label{fig:d3464spiral}
	}
~~~~~~~~~
	\subfigure[][]{
		\includegraphics[scale=.7]{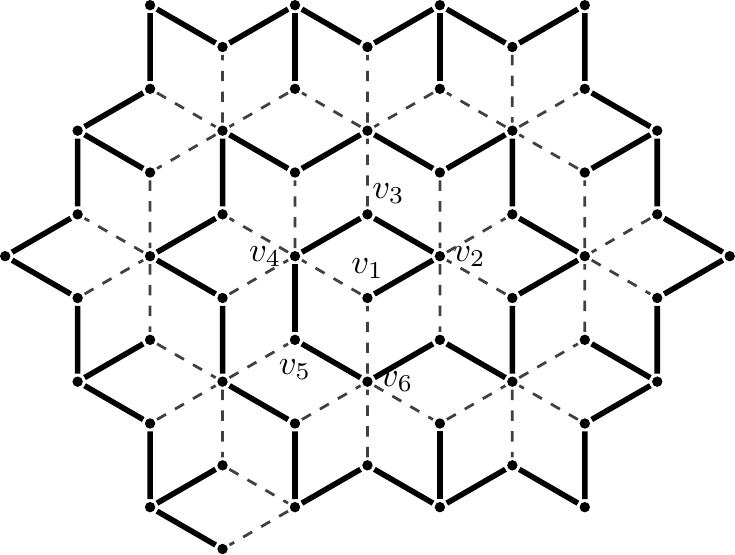}
		\label{fig:d3636spiral}
	}
	\caption[]{Subgraphs of the (a) D(3,4,6,4) lattice and (b)
		D(3,6,3,6) lattice, which require arbitrarily many colors. Dashed edges
		are labeled $F$.}
	\label{fig:spirals}
\end{figure}

\begin{figure}[t]
\centering
	\subfigure[]{
		\fbox{\includegraphics[width=.8\textwidth]{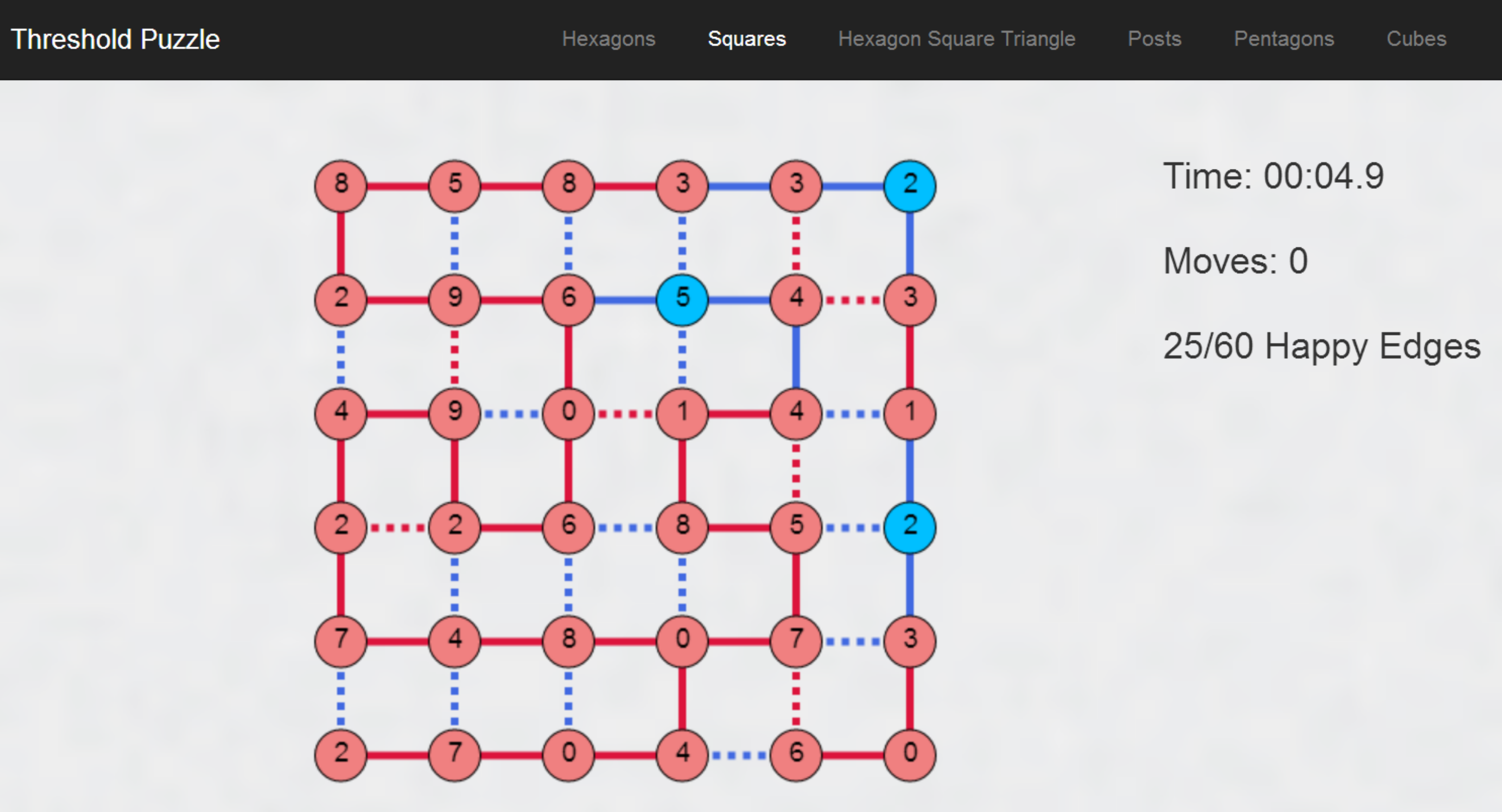}}}
	\subfigure[]{
		\fbox{\includegraphics[width=.8\textwidth]{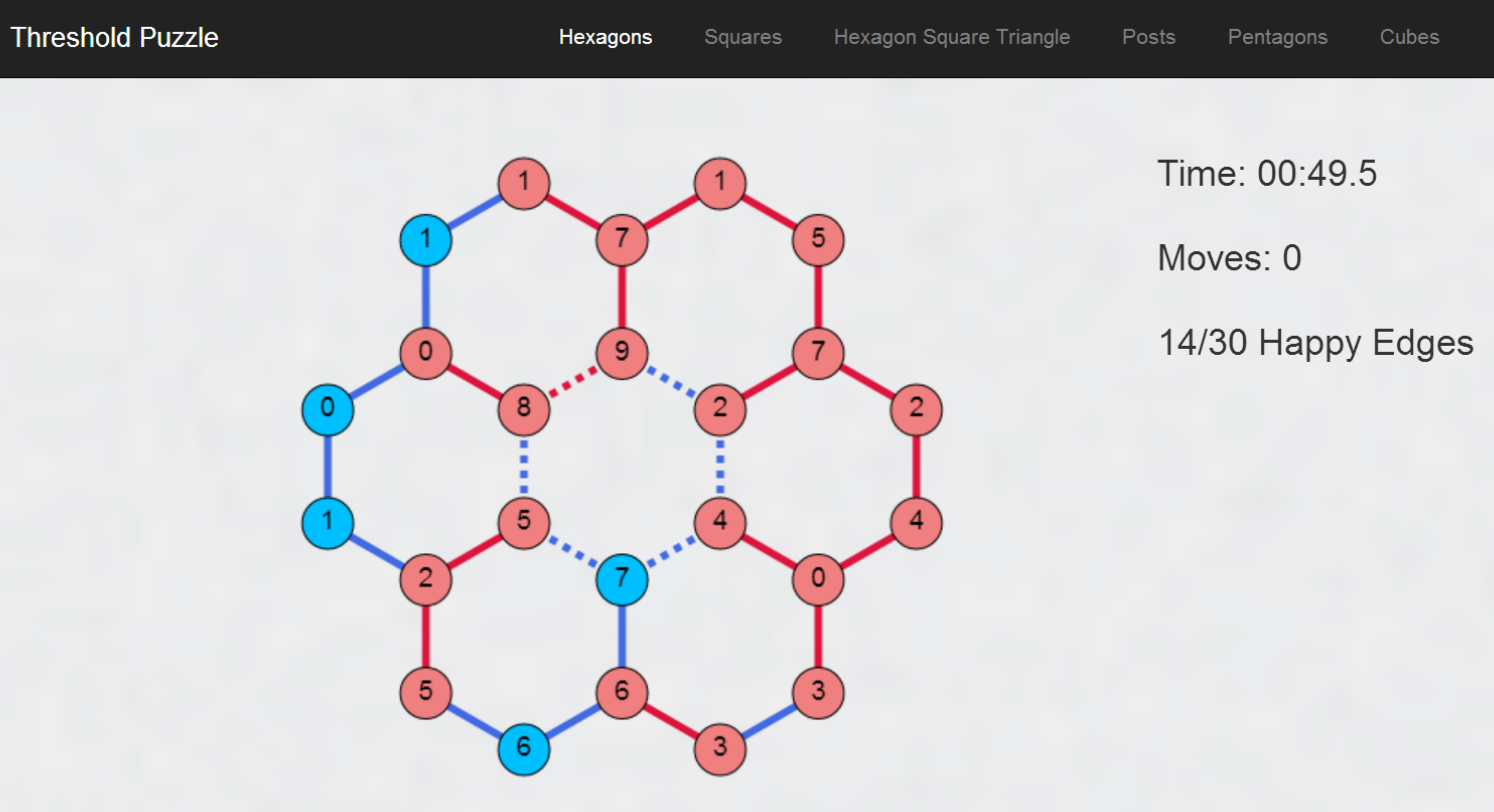}}}
	\caption{Snapshot of the \emph{Happy Edges
} puzzle environment available for
    computers and mobile devices. Dashed edges indicate F edges, and vertices/edges are blue if the
    color of their neighbors/endpoints are correct.
	(a) The square lattice. (b) The triangle square hexagon lattice.}
	\label{fig:game}
\end{figure}
\end{document}